
\documentclass[10pt,twoside]{siamltex}
\usepackage{amsfonts,epsfig}

\setlength{\textheight}{190mm}
\setlength{\textwidth}{130mm}
\topmargin = 20mm


\setlength{\parskip}{.1in}



\newtheorem{example}[theorem]{Example}
\newtheorem{problem}[theorem]{Problem}





\def\IR{{\bf R}}
\def\IC{{\bf C}}

\def\diag{{\rm diag}\,}
\def\rank{{\rm rank}\,}
\def\tr{{\rm tr}\,}

\def\vec{{\rm vec}}

\setlength{\oddsidemargin}{-0.15in}
\setlength{\topmargin}{-0.5in}
\setlength{\textheight}{9in}
\setlength{\textwidth}{6.5in}

\def\qed{\hfill $\Box$\medskip}
\def\diag{{\rm diag}\,}

\def\IR{{\bf R}}
\def\IC{{\bf C}}

\def\cS{{\cal S}}
\def\cR{{\cal R}}
\def\tr{{\rm tr}}

\def\IC{{\mathbb C}}
\def\IR{{\mathbb R}}

\def\cR{{\mathcal R}}
\def\cS{{\mathcal S}}

\def\diag{{\rm diag}\,}

\def\tr{{\rm tr}}

\def\rank{{\rm rank}\,}
\def\diag{{\rm diag}}
\def\[{\left [}
\def\]{\right ]}
\def\({\left (}
\def\){\right )}

\def\Ra{{\ \Rightarrow\ }}

\def\dfrac{\displaystyle\frac}

\begin{document}
\openup .1\jot
\title{Ranks and eigenvalues of states with prescribed reduced states}

\author{Chi-Kwong Li${}^{1}$, Yiu-Tung Poon${}^{2}$, and Xuefeng Wang${}^{3}$ \\
${}^{1}$Department of Mathematics, College of William and Mary,  \\
Williamsburg, VA 23187, USA. (ckli@math.wm.edu)\\
${}^{2}$Department of Mathematics, Iowa State University, \\
Ames, IA 50011, USA. (ytpoon@iastate.edu) \\
${}^{3}$School of Mathematical Sciences,
Ocean University of China, \\
Qingdao,
Shandong 266100, China. (wangxuefeng@ouc.edu.cn)}

\date{}
\maketitle

\begin{abstract}
For a quantum state represented as an $n\times n$ density matrix $\sigma \in M_n$,
let $\cS(\sigma)$ be the compact convex
set of quantum states $\rho = (\rho_{ij})
\in M_{m\cdot n}$ with the first partial trace equal to $\sigma$, i.e.,
$\tr_1(\rho) = \rho_{11}  + \cdots + \rho_{mm} = \sigma$. It is known that
if $m \ge n$ then there is a rank one matrix $\rho \in \cS(\sigma)$
satisfying $\tr_1(\rho) = \sigma$.
If $m < n$, there may not be rank one matrix in $\cS(\sigma)$.
In this paper,
we determine the ranks of the elements and ranks of the extreme points of the set
$\cS$. We also determine $\rho^* \in \cS(\sigma)$ with rank bounded by $k$
such that $\|\tr_1(\rho^*) - \sigma\|$ is minimum for a given
unitary similarity invariant norm $\|\cdot\|$.
Furthermore, the relation between the eigenvalues of $\sigma$ and
those of $\rho \in \cS(\sigma)$ is analyzed.
Extension of our results and open problems will be mentioned.

\end{abstract}

Keywords. Quantum states, reduced states, majorization, ranks, eigenvalues.

AMS Classification. 15A18, 15A60, 15A42, 15B48, 46N50.

\section{Introduction}

In quantum information science, quantum states are used to store,
process, and transmit information. Mathematically,  quantum states
are represented by density matrices, i.e., positive semidefinite
matrices of trace 1; see \cite{K,NC} for example.
Let $M_n$ ($H_n$) be the set of $n\times n$ complex (Hermitian)
matrices, and let
$D(n)$ be the set of density matrices in $M_n$.
Suppose $\sigma_1 \in D(m)$ and $\sigma_2 \in D(n)$ are two quantum states.
Their product state is  $\sigma_1 \otimes \sigma_2$.
The combined system is known as the bipartite system,
and a general quantum state is represented by a density matrix
$\rho \in D({m\cdot n})$.
Two basic quantum operations used to extract information of the subsystems
from a quantum state of the bipartite system are the partial traces, which are
linear maps satisfying
$$\tr_1(\sigma_1\otimes \sigma_2) = \sigma_2 \quad \hbox{ and } \quad
\tr_2(\sigma_1\otimes \sigma_2) = \sigma_1$$
on tensor states $\sigma_1 \otimes \sigma_2 \in D({m\cdot n})$.
Then for a general state
$\rho = (\rho_{ij})_{1 \le i, j \le m} \in D({m\cdot n})$
such that $\rho_{ij} \in M_n$, we have
$$\tr_1(\rho) = \rho_{11} + \cdots + \rho_{mm} \in M_n
\quad \hbox{ and } \quad
\tr_2(\rho) = (\tr \rho_{ij})_{1 \le i, j \le m} \in M_m.$$

It is well known that for every $\sigma \in D(n)$ there is a pure
state $\rho \in D({n\cdot n})$ such that $\tr_1(\rho) = \sigma$.
This is known as the {\it purification process},
which is useful in the study of quantum computation; for example see \cite{NC}.
In fact, it is easy to show that for every $\sigma \in D(n)$ of rank $r$, there is
a pure state $\rho \in D({r \cdot n})$ satisfying $\tr_1(\rho) = \sigma$.
However, one may not be able to find a purification if the dimension of the first
system is bounded, say, due to limitation of resource or restriction on
the physical system.  In such a case, two questions naturally arise:

\begin{problem} Can we find
a pure state $\rho \in D({m\cdot n})$ such that
$\tr_1(\rho)$ is nearest to $\sigma$,
say, with respect to a certain norm $\|\cdot\|$ on  $H_n$?
\end{problem}

\begin{problem} Can we find $\rho \in D({m\cdot n})$ with rank as low as possible
so that  $\tr_1(\rho) = \sigma$.
\end{problem}

In Section 2, we will give complete answers to these problems.
In particular,
for a given $\sigma \in D(n)$ and a given positive integers $k$ and $m$,
we determine
$$\min\{\|\tr_1(\rho) - \sigma\|: \rho \in D({m\cdot n}) \hbox{ has rank at most } k\}$$
for any {\it unitary similarity invariant norm} $\|\cdot\|$
on $H_n$, i.e., norm $\|\cdot\|$ such that $\|UA U^{*}\| = \|A\|$
for any $A \in H_{n}$ and  unitary $U \in M_n$.
In fact, using the notion of {\it majorization}, we obtain a general result
on the existence of $\rho \in D({m\cdot n})$ with low rank such that
$\sigma - \tr_1(\rho)$ satisfies many nice properties.

To better understand quantum states with a prescribed reduced state,
we consider the compact convex set
$$\cS(\sigma) = \{\rho \in D({m\cdot n}): \tr_1(\rho) = \sigma\}.$$
In Sections 3, we determine the ranks of elements and the ranks of extreme points
in $\cS(\sigma)$. In Section 4, we analyze the relationship between the eigenvalues of
$\sigma $ and those of the elements in $\cS(\sigma )$.
We obtain a necessary and sufficient condition relating the eigenvalues
of $\rho$ and $\sigma$ when $m \ge n$, and also in some low dimension cases.
The general problem for the case when $m < n$ remains open.
In Section 5, we discuss the extensions
and difficulties of the study to multi-partite systems.

Researchers have used advanced techniques in
representation theory (see \cite{Hy,Kl} and their references)
to give a complete description of the relationship between the
eigenvalues of the reduced states  $\tr_1(\rho), \tr_2(\rho)$,
and those of the ``parent'' state $\rho$. However, it is not easy to
generate (and store) all the inequalities even for a moderate size problem (see \cite{Kl}).
Moreover, it is not easy to use the numerous set of inequalities to answer basic
questions.
For example, for $(m,n) = (2,3)$, there is a density matrix
$\rho \in M_{2\cdot 3}$ and reduced states $\tr_2(\rho), \tr_1(\rho)$ with eigenvalues
$a_1 \ge \cdots \ge a_6$, $b_1\ge b_2$, and $c_1 \ge c_2 \ge c_3$ respectively
if and only if 41 inequalities are satisfied \cite{Kl}.
However, it is not easy to use the result to answer Problems 1 and 2,
and other simple problems such as:

\begin{enumerate}
\item Characterize the  eigenvalues $a_1 \ge \cdots \ge a_6$
of a density matrix $\rho \in M_{2\cdot 3}$
such that the (first) partial trace is a maximally entangled state, i.e.,
$\tr_1(\rho) = I_3/3$.

\item Determine all possible ranks of matrices in the convex set
$$\cS(I_3/3) = \{\rho \in D({2\cdot 3}): \tr_1(\rho) = I_3/3\}.$$

\item Determine the ranks 
of the extreme points of the convex set $\cS$ above.

\end{enumerate}

Nevertheless, one can readily answer the above problems using
our results in Sections 3 and 4. (See Section 5.)

We conclude this section by fixing some notations.
We will use $X^t$ and $X^*$ to denote the
transpose and conjugate transpose of a matrix or vector $X$.

Let
$\{e_1^{(m)}, \dots, e_m^{(m)}\}$ and
$\{e_1^{(n)}, \dots, e_n^{(n)}\}$ be the standard bases
for $\IC^m$ and $\IC^n$, respectively. Then, clearly,
$\{e_1^{(m)}\otimes e_1^{(n)},
e_1^{(m)}\otimes e_2^{(n)}, \dots, e_m^{(m)}
\otimes e_n^{(n)}\}$ is the standard basis for
$\IC^{m}\otimes \IC^n \equiv \IC^{mn}$.
For $\ell =m,n$ and $1\le i,\ j\le \ell$, let $E^{(\ell)}_{i\,j}=e^{(\ell)}_i(e^{(\ell)}_j)^t$.
Then $\{E^{(\ell)}_{i\,j}:1\le i,\ j\le \ell\}$  is the standard basis for $M_{\ell}$.
For simplicity, we use the notation $e_i$ for $e_i^{(m)}$ or $e_i^{(n)}$ and $E_{i\, j}$ for
$E^{(\ell)}_{i\,j}$,
if the dimension   is clear in the context.
Also, we use $e_i\otimes e_j$
instead of $e_i^{(m)}\otimes e_j^{(n)}$.

Furthermore, we use $PSD(n)$ and $\cR_k(n)$ to denote the sets of matrices in $M_n$
which are positive semidefinite  and have rank at most  $k$, respectively.

Two linear maps
$$[\,\cdot\,]:
\IC^{mn} \rightarrow M_{n,m} \quad \hbox{ and } \quad \vec: M_{n,m} \rightarrow \IC^{mn}$$
will be used frequently in our discussion. Here,
for $w = (w_1, \dots, w_{mn})^t \in \IC^{mn}$
$W = [w]$ is the  $n\times m$ matrix such that the $j$th column equals
$(w_{(j-1)n+1}, \dots, w_{jn})^t$ for $j = 1, \dots, m$;
and $\vec$ is the inverse map  which converts
an $n\times m$ matrix $W$ to $w = \vec(W) \in \IC^{mn}$
so that $W = [w]$.
Note that
$$\tr_1(ww^*) = W W ^* \quad \hbox{ and } \quad
\tr_2(ww^*) = W^t(W^t)^*.$$

\section{Approximation by reduced states of low rank states}

To state and prove our results, we need the following definitions and notation.

Recall that for $x, y \in \IR^n$, $x$ is {\it majorized} by $y$, denoted by
$x \prec y$, if the sum of entries of the vectors are the same,
and the sum of the $k$ largest entries of $x$ is not
larger than that of $y$ for $k = 1, \dots, n-1$.
A scalar function $f: \IR^n \rightarrow \IR$ is {\it Schur convex} provided
$f(x) \le f(y)$ whenever $x \prec y$.

We can extend the definition
of majorization and Schur convex function to Hermitian matrix as follows.
For every $A \in H_n$, let $\lambda(A)\in \IR^n$ be the vector of eigenvalues
of $A$ with entries arranged in descending order.
For $A, B \in H_n$, we write $A \prec B$ if $\lambda(A) \prec \lambda(B)$.
A function $f: \IR^n \rightarrow \IR$ can be extended to
$\tilde f: H_n \rightarrow \IR$ by setting $\tilde f(A) = f(\lambda(A))$.
On the other hand, some scalar functions on $H_n$ or $D(n)$ can be viewed as
an extension of $f: \IR^n \rightarrow \IR$.
For example, the determinant function $A \mapsto \det(A)$ on $H_n$
corresponds to  $f(x_1, \dots, x_n) = \prod_{j=1}^n x_j$;
the von Neumann entropy $\rho \mapsto -\tr \rho( \log \rho)$ on $D(n)$
corresponds to $f(x) = -\sum_{j=1}^n x_j \log x_j$ with the convention that
$x_j \log x_j = 0$ if $x_j = 0$. Moreover, every unitary similarity invariant norm
$\|\cdot\|$ corresponds to a Schur convex norm function $f: \IR^n \rightarrow \IR$;
see \cite{LT}.
For example, for $1 \le p \le \infty$ the Schatten $p$-norm defined by
$$\|A\|_p = \{\tr |A|^p\}^{1/p}$$
is a unitary similarity invariant norm,
where $|A|$ is the unique positive semi-definite matrix such that
$|A|^2 = A^*A$. Here, we take the limit $p \rightarrow \infty$,
and set $\|A\|_\infty = \max\{ |\mu|:
\mu \hbox{ is an eigenvalue of } |A|\}$.
Clearly, the Schatten $p$-norm corresponds to the
$\ell_p$ norm on $\IR^n$ defined by $\ell_p(x_1, \dots, x_n) = (\sum_{j=1}^n |x_j|^p)^{1/p}$.

We have the following result.

\begin{theorem} \label{pro:appro1} Let $n,m,k$ be positive integers such that
$k \le m$. Suppose $\sigma \in D(n)$ has rank $r$ and has spectral decomposition
$\sum_{j=1}^r \lambda_j x_j x_j^*$ with $\lambda_1 \ge \cdots \ge \lambda_r > 0$.
Then there is
$\rho \in D({m\cdot n})$ with rank at most $k$ such that
$\tr_1(\rho) = \sigma$ if and only if $r \le mk$.

If $mk < r$, then there is
$\rho \in D({m\cdot n})$ with rank $k$ such that
$$\tr_1(\rho) = \sum_{j=1}^{mk} (\lambda_j + \mu) x_j x_j^*,$$
where $\mu=(\sum_{j=mk+1}^{r}  \lambda_j) /(mk)$,
so that

\begin{equation}\label{maj1}\lambda(\sigma-\tr_1(\rho)) =(\lambda_{mk+1}, \dots, \lambda_{r},
\underbrace{0,\dots,0}_{\tiny n-r\mbox{\tiny \  terms}},
\underbrace{-\mu, \dots, -\mu}_{\tiny mk\mbox{\tiny \ terms}} )
\prec \lambda(\sigma-\tr_1(\tilde \rho)  )\end{equation}
for all
$\tilde\rho \in D({m\cdot n})$ with rank at most $k$.
\end{theorem}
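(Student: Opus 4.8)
The plan is to reduce everything to the structure of the map $[\,\cdot\,]:\IC^{mn}\to M_{n,m}$ and the identity $\tr_1(ww^*)=WW^*$ for $W=[w]$. If $\rho\in D(m\cdot n)$ has rank at most $k$, I would write its spectral decomposition $\rho=\sum_{i=1}^{k}w_iw_i^*$ with $w_i\in\IC^{mn}$, and set $W_i=[w_i]\in M_{n,m}$. Then linearity of $\tr_1$ gives $\tr_1(\rho)=\sum_{i=1}^k W_iW_i^*$, and stacking the $W_i$ side by side into a single $n\times(mk)$ matrix $\widehat W=[W_1\,\cdots\,W_k]$ shows $\tr_1(\rho)=\widehat W\widehat W^*$. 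Thus $\tr_1(\rho)$ is positive semidefinite with $\rank(\tr_1(\rho))=\rank(\widehat W)\le mk$. This immediately yields one direction of the characterization: a rank-$\le k$ state $\rho$ with $\tr_1(\rho)=\sigma$ can exist only if $\rank(\sigma)=r\le mk$.

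For the converse, suppose $r\le mk$. The task is to realize $\sigma=\widehat W\widehat W^*$ with an $n\times(mk)$ matrix $\widehat W$ and then \emph{repartition} its columns into $k$ blocks $W_1,\dots,W_k\in M_{n,m}$, producing $w_i=\vec(W_i)$ and $\rho=\sum_i w_iw_i^*\in D(m\cdot n)$ of rank at most $k$. Starting from the spectral decomposition $\sigma=\sum_{j=1}^r\lambda_j x_jx_j^*$, I would take $\widehat W=[\sqrt{\lambda_1}\,x_1\,\cdots\,\sqrt{\lambda_r}\,x_r\,|\,0\,\cdots\,0]$ padded with zero columns up to $mk$ columns; then $\widehat W\widehat W^*=\sigma$ exactly, and regrouping into $k$ blocks of $m$ columns gives the desired rank-$\le k$ preimage. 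The trace-one normalization is inherited since $\tr\rho=\sum_i\|w_i\|^2=\sum_i\|W_i\|_F^2=\tr(\widehat W\widehat W^*)=\tr\sigma=1$.

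For the second assertion ($mk<r$), the construction is forced to ``lose'' the $r-mk$ smallest eigenvalues of $\sigma$, and the stated target $\tr_1(\rho)=\sum_{j=1}^{mk}(\lambda_j+\mu)x_jx_j^*$ is the natural choice that redistributes the deficit $\sum_{j=mk+1}^r\lambda_j$ evenly (as $\mu$ per retained eigenvalue) so as to keep trace $1$. I would exhibit this $\tr_1(\rho)$ explicitly as $\widehat W\widehat W^*$ for $\widehat W=[\sqrt{\lambda_1+\mu}\,x_1\,\cdots\,\sqrt{\lambda_{mk}+\mu}\,x_{mk}]$, a rank-$mk$ matrix, and repartition into $k$ blocks to get $\rho$ of rank exactly $k$ (the blocks can be arranged to have rank $m$ each, forcing $\rank\rho=k$). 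The resulting difference $\sigma-\tr_1(\rho)$ has exactly the eigenvalue list in \eqref{maj1}: the eigenvectors $x_1,\dots,x_{mk}$ contribute $-\mu$, the tail $x_{mk+1},\dots,x_r$ contributes $\lambda_{mk+1},\dots,\lambda_r$, and the kernel of $\sigma$ contributes $n-r$ zeros.

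The genuinely substantive step is the majorization inequality $\lambda(\sigma-\tr_1(\rho))\prec\lambda(\sigma-\tr_1(\tilde\rho))$ over all rank-$\le k$ states $\tilde\rho$. I expect this to be the main obstacle. Here the idea is that any competitor $\tilde\rho$ gives $\tr_1(\tilde\rho)=\widehat V\widehat V^*$ for some $n\times(mk)$ matrix $\widehat V$, hence $\tr_1(\tilde\rho)$ ranges exactly over the PSD matrices of rank $\le mk$ and trace $1$, so the problem becomes: among all $T\in PSD(n)$ with $\rank T\le mk$ and $\tr T=1$, minimize $\sigma-T$ in the majorization (Schur) order. The claim is that the minimizer is the truncation-plus-equalization above. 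I would prove this by applying a Ky Fan / Fan–type sum-of-eigenvalues argument together with the Weyl and Lidskii inequalities for sums of Hermitian matrices to bound the partial sums $\sum_{j=1}^\ell\lambda_j(\sigma-\tr_1(\tilde\rho))$ from below by the corresponding partial sums of the stated vector, the equal-trace constraint fixing the total sum. The delicate point is handling the $n-r$ zero eigenvalues and the sign pattern (the $mk$ negative $-\mu$ entries sitting below the zeros), so the ordering of eigenvalues in $\lambda(\sigma-\tr_1(\tilde\rho))$ must be tracked carefully when verifying each partial-sum inequality.
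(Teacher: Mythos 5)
Your constructions in the first two parts are correct and are essentially the paper's own argument in different clothing: partitioning the columns of $\widehat W$ into $k$ blocks of $m$ columns and vectorizing each block is exactly the paper's step of writing $\sigma$ (respectively $\hat\sigma=\sum_{j=1}^{mk}(\lambda_j+\mu)x_jx_j^*$) as a sum of $k$ positive semidefinite pieces of rank at most $m$ and purifying each piece, and your necessity direction (the rank of $\tr_1(\rho)$ is at most $mk$ when $\rank\rho\le k$) is the same counting argument. Your reduction of the optimality claim to ``minimize over all positive semidefinite $T$ with $\rank T\le mk$ and $\tr T=1$'' is also the right framing.

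However, for the majorization (\ref{maj1}) --- which is the substantive content of the theorem, as you yourself note --- your proposal has a genuine gap: you only name the tools (``Ky Fan / Weyl / Lidskii \dots tracked carefully'') without producing the partial-sum inequalities, and a generic appeal to these inequalities does not suffice. Two specific ideas are missing, both of which the paper supplies. First, writing $a=\lambda(\sigma-T)$, $c$ for the vector in (\ref{maj1}), one needs, for $1\le s\le n-mk$, the bound $\sum_{i=1}^{s}a_i\ge\sum_{i=1}^{s}\lambda_{mk+i}(\sigma)=\sum_{i=1}^{s}c_i$; this follows from Wielandt's inequality applied to $\sigma=(\sigma-T)+T$ with the \emph{shifted} index set $\{mk+1,\dots,mk+s\}$, exploiting that $\lambda_{mk+i}(T)=0$ by the rank constraint --- it is precisely this choice of indices that turns the rank hypothesis into an eigenvalue inequality, and your sketch never identifies it. Second, for $s>n-mk$ eigenvalue inequalities alone do not close the argument: the paper replaces the last $mk$ entries of $a$ by their average $-\tilde\mu$, observes that the resulting vector is majorized by $a$ (averaging a tail block of a decreasing vector), and that it majorizes $c$ because $\tilde\mu=\bigl(\sum_{j=1}^{n-mk}a_j\bigr)/(mk)\ge\bigl(\sum_{j=1}^{n-mk}c_j\bigr)/(mk)=\mu$ by the first step, so the remaining partial sums reduce to comparing the constant tails $-\mu$ and $-\tilde\mu$. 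Without these two steps your plan identifies where the difficulty lies but does not prove (\ref{maj1}).
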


By the properties of Schur convex functions and unitary similarity
invariant norm (see \cite{MO} and \cite{LT}), we immediately have the following.

\begin{corollary}
Suppose $\sigma$ and $\rho$ satisfy the hypothesis and conclusion on
Theorem {\rm \ref{pro:appro1}}.
Then for every Schur convex function $f: \IR^n \rightarrow \IR$, we have
$$f(\lambda(\sigma-\tr_1(\rho))) \le f(\lambda(\sigma-\tr_1(\tilde \rho)))
\quad \hbox{ for all } \tilde \rho \in D({m\cdot n}) \hbox{ of rank at most } k.$$
Furthermore, for every  unitary similarity invariant norm $\|\cdot\|$ on $H_n$, we have
$$\|\sigma-\tr_1(\rho)\| \le \|\sigma-\tr_1(\tilde \rho)\|
\quad \hbox{ for all } \tilde \rho \in D({m\cdot n}) \hbox{ of rank at most } k.$$
\end{corollary}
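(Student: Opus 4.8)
The plan is to recognize that the corollary is an immediate consequence of the majorization relation already supplied by the hypothesis, together with the defining monotonicity of Schur convex functions and the known correspondence between unitary similarity invariant norms and Schur convex norm functions. Since $\sigma$ and $\rho$ are assumed to satisfy both the hypothesis and the conclusion of Theorem \ref{pro:appro1}, I may take as given the majorization
$$\lambda(\sigma - \tr_1(\rho)) \prec \lambda(\sigma - \tr_1(\tilde\rho)) \quad \hbox{ for all } \tilde\rho \in D({m\cdot n}) \hbox{ of rank at most } k.$$
This is precisely (\ref{maj1}) in the case $mk < r$. In the complementary case $r \le mk$, the theorem yields $\tr_1(\rho) = \sigma$ exactly, so $\lambda(\sigma - \tr_1(\rho))$ is the zero vector; since $\tr(\sigma - \tr_1(\tilde\rho)) = 1 - 1 = 0$ for any density matrix $\tilde\rho$, and the sum of the $k$ largest entries of any zero-sum vector is nonnegative, the zero vector is majorized by $\lambda(\sigma - \tr_1(\tilde\rho))$ and the displayed relation persists. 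Hence the majorization is available in all cases, and no part of Theorem \ref{pro:appro1} needs to be reproved here.

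For the first assertion I would simply unwind the definition of Schur convexity. By definition, $f(x) \le f(y)$ whenever $x \prec y$; applying this with $x = \lambda(\sigma - \tr_1(\rho))$ and $y = \lambda(\sigma - \tr_1(\tilde\rho))$ gives
$$f(\lambda(\sigma - \tr_1(\rho))) \le f(\lambda(\sigma - \tr_1(\tilde\rho)))$$
for every Schur convex $f \colon \IR^n \to \IR$ and every admissible $\tilde\rho$, which is exactly what is claimed.

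For the second assertion I would invoke the fact recorded earlier in the paper (and established in \cite{LT}) that every unitary similarity invariant norm $\|\cdot\|$ on $H_n$ has the form $\|A\| = f(\lambda(A))$ for a suitable Schur convex norm function $f \colon \IR^n \to \IR$. Because $\sigma - \tr_1(\rho)$ and $\sigma - \tr_1(\tilde\rho)$ are Hermitian, this representation applies to both, and feeding the associated $f$ into the inequality of the first assertion yields
$$\|\sigma - \tr_1(\rho)\| = f(\lambda(\sigma - \tr_1(\rho))) \le f(\lambda(\sigma - \tr_1(\tilde\rho))) = \|\sigma - \tr_1(\tilde\rho)\|,$$
as required.

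The only point demanding care, rather than a genuine obstacle, is that the spectra involved are in general indefinite: by (\ref{maj1}) the difference $\sigma - \tr_1(\rho)$ carries the negative eigenvalues $-\mu$. Consequently both the Schur convex function $f$ and its associated norm function must be regarded as functions on all of $\IR^n$, not merely on the nonnegative orthant. This creates no difficulty, since majorization and Schur convexity are defined for arbitrary real vectors and the norm-function correspondence of \cite{LT} is stated for the full eigenvalue vector of a Hermitian matrix; one need only check that the descending-order convention for $\lambda(\cdot)$ matches the one used in (\ref{maj1}).
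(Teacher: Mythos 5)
Your proof is correct and takes essentially the same route as the paper, which proves the corollary simply by citing the monotonicity of Schur convex functions under majorization together with the correspondence from \cite{LT} between unitary similarity invariant norms on $H_n$ and Schur convex norm functions on $\IR^n$. Your explicit treatment of the case $r \le mk$ (observing that the zero vector is majorized by any zero-sum eigenvalue vector, since the top-$k$ partial sums of a descending zero-sum vector are nonnegative) fills in a detail the paper leaves implicit, and it is accurate.
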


\it Proof of Theorem \ref{pro:appro1}. \rm
If $r \le mk$, then we can write $\sigma = \sigma_1 + \cdots + \sigma_k$, where each
$\sigma_i$ has rank at most $m$ and has a purification $\rho_i \in D({m\cdot n})$.
Then $\rho = \rho_1 + \cdots + \rho_k \in D({m\cdot n})$ has rank at most $k$ such that
$\tr_1(\rho) = \sigma$.

Conversely, if $\rho \in D({m\cdot n})$ has rank at most $k$
so that it is the sum of at most $k$ rank one matrices $\rho_1, \dots, \rho_k$.
Then $\tr_1(\rho_i)$ has rank at most $m$, and $\tr_1(\rho)$ has rank at most $mk$.

Suppose $mk < r$. Let $\hat \sigma = \sum_{j=1}^{mk} (\lambda_j + \mu) x_j x_j^* \in H_n$.
Then $\hat \sigma = \hat \rho_1 + \cdots + \hat \rho_k$ such that each
$\hat \rho_j$ has rank $m$, and admits a purification $\rho_j \in D({m\cdot n})$.
Let $\rho = \rho_1 + \cdots + \rho_k$.
Then $\rho$ has rank at most $k$ and $\tr_1(\rho) = \hat \sigma$.

To prove (\ref{maj1}), suppose $r > mk$.  Let
$$(c_1, c_2\,\dots, c_{n})=\lambda(\sigma-\tr_1(\rho))=(\lambda_{mk+1}, \dots, \lambda_{r},
\underbrace{0,\dots,0}_{\tiny n-r\mbox{\tiny \  terms}},
\underbrace{-\mu, \dots, -\mu}_{\tiny mk\mbox{\tiny \ terms}} ),$$
where $\mu=(\sum_{j=mk+1}^{r}  \lambda_j) /(mr)$.

Suppose  $\tilde \rho$ has rank at most $k$. Then $\tr_1(\tilde \rho)$ has rank at most $mk$. Let
$$\lambda(\tr_1(\tilde \rho))=(b_1,\dots,b_{n}).$$
Then we have $b_i=  0$ for $mk <i \le n$.

Suppose $\lambda(\sigma - \tr_1(\tilde \rho)) = (a_1, \dots, a_n).$  We will prove that
\begin{equation}\label{maj2}(c_1, c_2\,\dots, c_{n})\prec (a_1, a_2\,\dots, a_{n}).
\end{equation}
Clearly, we have $\sum_{i=1}^{mn}c_i=0=\sum_{i=1}^{mn}a_i$.
Since $\sigma=(\sigma - \tr_1(\tilde \rho))+\tr_1(\tilde \rho)$, by Wielandt's inequalities \cite[Theorem 9.G.1a]{MO},
for $1\le s\le n-mk$, we have
$$\sum_{i=1}^sa_i
=\sum_{i=1}^sa_i+\sum_{i=1}^s b_{mk+i}
\ge\sum_{i= 1}^{s}\lambda_{mk+i}
=\sum_{i=1}^sc_i.$$
Let $\tilde \mu =
(\sum_{j=1}^{n-mk}  a_j) /(mk)=- (\sum_{j=n-mk+1}^{n} a_j)/(mk)$.
Then we have
$$(c_1, c_2\,\dots, c_{n})\prec (a_1, a_2\,\dots, a_{n-mk},
\underbrace{-\tilde\mu, \dots, -\tilde\mu}_{\tiny mk\mbox{\tiny \ terms}} )
\prec (a_1, a_2\,\dots, a_{n}).$$
\vskip -.5in \qed

\section{Ranks of elements in $\cS(\sigma)$}

In this section, for $\sigma\in D(n)$, we consider the compact convex set
$$\cS(\sigma) = \{\rho\in D(m\cdot n): \tr_1(\rho)= \sigma\}.$$
We will completely determine the
ranks attainable by its elements and by its extreme points.
The following lemma is useful in our discussion.

\begin{lemma}\label{lem:Connect-Extension-UST}
Let $\sigma \in D(n)$ and $U \in M_n$ be unitary.
Then
$$\cS(U\sigma U^*) = (I_m\otimes U) \cS(\sigma) (I_m\otimes U)^*
= \{ (I_m\otimes U)\rho (I_m\otimes U)^*: \rho \in \cS(\sigma)\}.$$
\end{lemma}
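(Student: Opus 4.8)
The plan is to reduce the asserted set equality to a single covariance property of the partial trace: that conjugation by $I_m\otimes U$ intertwines $\tr_1$ with conjugation by $U$. First I would record the key identity
$$\tr_1\bigl((I_m\otimes U)\rho(I_m\otimes U)^*\bigr) = U\,\tr_1(\rho)\,U^*, \qquad \rho \in M_{m\cdot n}.$$
To verify it, write $\rho = (\rho_{ij})_{1\le i,j\le m}$ in block form with $\rho_{ij}\in M_n$. Since $I_m\otimes U$ is block diagonal with $U$ in each diagonal slot, conjugation sends the $(i,j)$ block $\rho_{ij}$ to $U\rho_{ij}U^*$; summing the diagonal blocks then yields $\sum_{i=1}^m U\rho_{ii}U^* = U\bigl(\sum_{i=1}^m \rho_{ii}\bigr)U^* = U\,\tr_1(\rho)\,U^*$, as claimed.

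Next I would observe that the map $\Phi(\rho) = (I_m\otimes U)\rho(I_m\otimes U)^*$ is a bijection of $D(m\cdot n)$ onto itself. Indeed, $I_m\otimes U$ is unitary, so $\Phi$ preserves positive semidefiniteness and trace, hence carries density matrices to density matrices, and its inverse is conjugation by $(I_m\otimes U)^*$. With the covariance identity and this bijectivity in hand, both inclusions follow formally. For the inclusion ``$\subseteq$'': if $\rho\in\cS(\sigma)$, then $\tr_1(\rho)=\sigma$, so the identity gives $\tr_1(\Phi(\rho)) = U\sigma U^*$, whence $\Phi(\rho)\in\cS(U\sigma U^*)$. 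For ``$\supseteq$'': given any $\tilde\rho\in\cS(U\sigma U^*)$, set $\rho=\Phi^{-1}(\tilde\rho)$; then $\tr_1(\rho)=U^*(U\sigma U^*)U=\sigma$, so $\rho\in\cS(\sigma)$ and $\tilde\rho=\Phi(\rho)$ lies in the image $(I_m\otimes U)\cS(\sigma)(I_m\otimes U)^*$.

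The only step demanding any attention is the covariance identity, and even that is routine once the block structure is written out; there is no genuine obstacle in this lemma. I would therefore present the identity as the crux of the argument and let the set equality drop out as a formal consequence of it together with the invertibility of $\Phi$.
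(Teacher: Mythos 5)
Your proof is correct and complete: the covariance identity $\tr_1\bigl((I_m\otimes U)\rho(I_m\otimes U)^*\bigr)=U\,\tr_1(\rho)\,U^*$ together with the invertibility of conjugation by the unitary $I_m\otimes U$ gives both inclusions exactly as you argue. The paper in fact states this lemma without any proof, treating it as immediate, and your write-up is precisely the standard argument the authors leave implicit.
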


Recall that $\sigma\in D(n)$ is a pure state if  $\rank(\sigma)=1.$
It is well known that the extreme points of $D(n)$ are
pure states. For a pure state $\sigma \in D(n)$, we have the
following complete description of $\cS(\sigma)$. In particular,
all states in the set $\cS(\sigma)$ are tensor states.

\begin{proposition}\label{pure}
Let $\sigma \in D(n)$ be a pure state. Then
$$\cS(\sigma) = \{\xi\otimes\sigma: \xi \in D(m)\}.$$
Consequently, there is $\rho \in \cS(\sigma)$ with rank $k$ if and only if
$1 \le k \le m$. Moreover, $\rho$ is an extreme point of $\cS(\sigma)$
if and only if $\rho = \xi \otimes \sigma$ for a rank one $\xi \in D(m)$.
\end{proposition}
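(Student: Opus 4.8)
The plan is to prove the set equality $\cS(\sigma) = \{\xi \otimes \sigma : \xi \in D(m)\}$ first, and then read off the rank and extreme-point assertions as formal consequences. One inclusion is immediate: for any $\xi \in D(m)$ we have $\tr(\xi) = 1$, so $\tr_1(\xi \otimes \sigma) = \tr(\xi)\,\sigma = \sigma$, whence $\xi \otimes \sigma \in \cS(\sigma)$.

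For the reverse inclusion, write $\sigma = xx^*$ for a unit vector $x \in \IC^n$, and let $Q = I_n - xx^*$ be the orthogonal projection onto $\span\{x\}^{\perp}$. Given $\rho \in \cS(\sigma)$, the key step is to confine its support, i.e.\ to show $\Range(\rho) \subseteq \IC^m \otimes \span\{x\}$. To this end I would compute, using the block description of the partial trace, $\tr[(I_m \otimes Q)\rho] = \tr[Q\,\tr_1(\rho)] = \tr[Q\sigma] = x^*Qx = 0$, since $Q$ annihilates $x$. Both $I_m \otimes Q$ and $\rho$ are positive semidefinite, and the vanishing of the trace of their product then forces $(I_m \otimes Q)\rho = 0$; hence $\Range(\rho) \subseteq \ker(I_m \otimes Q) = \IC^m \otimes \span\{x\}$. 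Once the support lies in this $m$-dimensional subspace, every eigenvector of $\rho$ belonging to a nonzero eigenvalue has the form $u \otimes x$, so writing $\rho$ as a sum of such rank-one terms $\sum_k (u_k \otimes x)(u_k \otimes x)^*$ and factoring the tensor gives $\rho = \(\sum_k u_k u_k^*\) \otimes \sigma = \xi \otimes \sigma$, where $\xi \in D(m)$ follows by taking traces.

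I expect the positivity argument---the passage from $\tr[(I_m \otimes Q)\rho] = 0$ to $(I_m \otimes Q)\rho = 0$---to be the crux of the proof; it rests on the standard fact that for positive semidefinite $A, B$ one has $\tr(AB) = 0$ if and only if $AB = 0$, which one verifies via $\tr(AB) = \tr(A^{1/2}BA^{1/2})$. Everything after the support is pinned down is bookkeeping.

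The remaining assertions follow at once from the identification $\cS(\sigma) = \{\xi \otimes \sigma : \xi \in D(m)\}$. The map $\xi \mapsto \xi \otimes \sigma$ is an affine bijection from $D(m)$ onto $\cS(\sigma)$, and since $\rank(\sigma) = 1$ we have $\rank(\xi \otimes \sigma) = \rank(\xi)$; as $\rank(\xi)$ ranges over $\{1, \dots, m\}$ as $\xi$ runs through $D(m)$, the attainable ranks in $\cS(\sigma)$ are exactly $1, \dots, m$. Finally, an affine bijection carries extreme points to extreme points, and the extreme points of $D(m)$ are precisely the rank-one (pure) states, so $\rho$ is an extreme point of $\cS(\sigma)$ if and only if $\rho = \xi \otimes \sigma$ for a rank-one $\xi \in D(m)$.
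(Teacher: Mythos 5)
Your proof is correct, but it reaches the key inclusion $\cS(\sigma) \subseteq \{\xi\otimes\sigma\}$ by a different route than the paper. The paper first invokes its unitary covariance lemma (Lemma \ref{lem:Connect-Extension-UST}) to reduce to $\sigma = E_{11}$, and then argues entry-wise on the block matrix $\rho = (\rho_{ij})$: the diagonal blocks are positive semidefinite and sum to $E_{11}$, forcing $\rho_{ii} = \xi_i E_{11}$, and then the zero-pattern rigidity of positive semidefinite matrices (a zero diagonal entry kills its row and column) forces every off-diagonal block to satisfy $\rho_{ij} = \xi_{ij}E_{11}$, so $\rho = \xi\otimes\sigma$ with $\xi = \tr_2(\rho)$. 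You instead work basis-free: from $\tr[(I_m\otimes Q)\rho] = \tr[Q\,\tr_1(\rho)] = 0$ with $Q = I_n - xx^*$, and the standard fact that $\tr(AB)=0$ for positive semidefinite $A,B$ forces $AB=0$, you confine the support of $\rho$ to $\IC^m\otimes\span\{x\}$ and then factor via the spectral decomposition. Your support-confinement step is a reusable general principle (if $\tr_1(\rho)$ is supported on a subspace $V$, then $\rho$ is supported on $\IC^m\otimes V$) and avoids any choice of basis, at the cost of invoking the trace--positivity lemma; the paper's argument is more elementary and exploits the covariance lemma it has already set up for use throughout Section 3. The endgames agree: both treat $\xi\mapsto\xi\otimes\sigma$ as an affine bijection of $D(m)$ onto $\cS(\sigma)$ to read off the ranks $1,\dots,m$ and the extreme points, though you spell out this bijection argument more explicitly than the paper's closing sentence does.
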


\begin{proof}
By Lemma \ref{lem:Connect-Extension-UST}, we may assume that
$\sigma =   E_{11}\in M_n$.
Then $\rho = (\rho_{ij})_{1 \le i, j \le m} \in D(m\cdot n)$ with $\rho_{ij} \in M_n$
if and only if $\rho_{11}+\rho_{22}+\cdots+\rho_{mm}=E_{11}$. Since $\rho$ is positive semidefinite,
 we see that $\rho_{ii} =\xi_iE_{11}$, where $\xi_i\ge 0$ for $i = 1,\dots,\ m$ and $\sum_{i=1}^m\xi_i=1$. Thus
$\rho_{ij} = \xi_{ij}E_{11}$ for some $\xi_{ij}$, $i,\ j=1,\dots,\ m$, with $\xi_{ii}=\xi_i$. Hence, $\rho = \xi \otimes\sigma$
with $\xi = \(\xi_{ij}\)=\tr_2 (\rho)\in D(m)$.

Clearly, $\rank(\rho) = \rank(\xi) \in \{1,\dots,\ m\}$.  Also, it is well known
that $D(m)$ is a compact convex set with the pure states as the set of extreme points.
The last statement follows.
\end{proof}

For a general state
$\sigma \in D(n)$, it is not so easy to give a complete description for
the set $\cS(\sigma)$.
In the following, we consider general states $\sigma \in D(n)$
and determine the ranks and extreme points
of matrices in  $\cS(\sigma)$.

\begin{theorem} \label{pro:rank1}
Let $\sigma \in D(n)$ have rank $r$.
There is $\rho \in \cS(\sigma) \subseteq D(m\cdot n)$
with rank $k$  if and only if
$$\lceil r/m \rceil \le k \le  rm.$$
In particular, if there are matrices in $\cS(\sigma)$ of rank $r_1, r_2$ with $r_1 < r_2$,
then there are matrices in $\cS(\sigma)$ of rank $r_1+1, \dots, r_2-1$.
\end{theorem}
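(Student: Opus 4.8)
The plan is to establish the two bounds on $k$ and then to exhibit, for every admissible value, an explicit $\rho\in\cS(\sigma)$ of rank exactly $k$; the final ``no gaps'' assertion is then immediate, since the set of realizable ranks turns out to be exactly the integer interval $[\lceil r/m\rceil, rm]$.

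For the \emph{lower bound} I would argue as in the proof of Theorem \ref{pro:appro1}: a rank-$k$ state $\rho$ is a sum of $k$ rank-one matrices $w_iw_i^*$, and since $\tr_1(w_iw_i^*)=W_iW_i^*$ with $W_i=[w_i]\in M_{n,m}$ has rank at most $m$, the state $\sigma=\tr_1(\rho)$ has rank at most $mk$; hence $k\ge\lceil r/m\rceil$. For the \emph{upper bound} the key is a range containment. Put $V=\Range(\sigma)$, so $\dim V=r$. If $v\perp V$ then $v^*\sigma v=\sum_{i=1}^m v^*\rho_{ii}v=0$, and since each diagonal block $\rho_{ii}$ of the positive semidefinite $\rho$ is positive semidefinite, every summand vanishes; thus $(e_i\otimes v)^*\rho(e_i\otimes v)=0$, which for $\rho\ge 0$ forces $\rho(e_i\otimes v)=0$. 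Therefore $\IC^m\otimes V^\perp\subseteq\ker\rho$, so $\Range(\rho)\subseteq\IC^m\otimes V$ and $\rank(\rho)\le mr$.

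For existence I would first use Lemma \ref{lem:Connect-Extension-UST} to reduce to $\sigma=\diag(\lambda_1,\dots,\lambda_r,0,\dots,0)$ with $\lambda_i>0$, so everything lives inside $\IC^m\otimes V$ with $V=\span\{e_1,\dots,e_r\}$, and then build two families whose rank sets are the overlapping intervals $[\lceil r/m\rceil, r]$ and $[r, mr]$. For the first family, partition $\{1,\dots,r\}$ into $\lceil r/m\rceil$ groups $S_\ell$ of size $s_\ell\le m$; in each group choose orthonormal $f^{(\ell)}_1,\dots,f^{(\ell)}_{s_\ell}\in\IC^m$ and set $v^{(\ell)}_p=\sqrt{\lambda_{j}}\,f^{(\ell)}_p\otimes e_{j}$ for the $p$-th index $j\in S_\ell$. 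The crucial computation is that $\tr_1\bigl(v^{(\ell)}_p (v^{(\ell)}_q)^*\bigr)$ is a scalar multiple of $\bigl((f^{(\ell)}_q)^*f^{(\ell)}_p\bigr)\,e_j e_{j'}^{\,*}$, which vanishes for $p\ne q$ because the $f^{(\ell)}_p$ are orthonormal. Hence, for any integers $j_\ell\in\{1,\dots,s_\ell\}$, partitioning each group's vectors into $j_\ell$ nonempty clusters and letting $g^{(\ell)}_a$ be the sum of the $v^{(\ell)}_p$ in the $a$-th cluster yields mutually orthogonal $g^{(\ell)}_a$, so that $\rho=\sum_{\ell}\sum_{a=1}^{j_\ell} g^{(\ell)}_a (g^{(\ell)}_a)^*$ has rank $\sum_\ell j_\ell$, while all cross terms drop out of the partial trace and $\tr_1(\rho)=\sum_{j=1}^r\lambda_j e_je_j^*=\sigma$. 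Letting the $j_\ell$ vary sweeps every rank in $[\lceil r/m\rceil, r]$.

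For the second family I would take $\rho$ block diagonal, $\rho=\diag(\rho_{11},\dots,\rho_{mm})$ with each $\rho_{ii}$ supported on $V$: fix $\rho_{ii}=\epsilon_i P_i$ for $i\ge 2$, where $P_i\ge 0$ has rank $k_i\in\{0,\dots,r\}$ and $\epsilon_i>0$ is small enough that $\rho_{11}=\sigma-\sum_{i\ge2}\rho_{ii}$ is positive definite on $V$ (hence of rank $r$). Then $\tr_1(\rho)=\sigma$ and $\rank(\rho)=r+\sum_{i\ge 2}k_i$ realizes every rank in $[r, mr]$; combining the two families covers the whole interval. I expect the main obstacle to be the bookkeeping of the first family, namely verifying that this ``partial dephasing'' of a purification keeps the partial trace fixed while raising the rank one step at a time, together with the orthogonality of vectors from different groups (which holds because distinct groups use disjoint coordinates $e_j$). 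The upper and lower bounds, by contrast, are short and structural.
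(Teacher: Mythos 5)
Your proposal is correct and follows the same overall architecture as the paper's proof---two rank bounds plus explicit constructions covering the overlapping intervals $[\lceil r/m\rceil,\, r]$ and $[r,\, rm]$---though the individual pieces differ in execution. The lower bound is the paper's argument verbatim. For the upper bound, the paper asserts in one line that every block $\rho_{ij}$ lies in $\span\{E_{pq}: 1\le p,q\le r\}$; your kernel argument (that $v\perp\Range(\sigma)$ forces $v^*\rho_{ii}v=0$ for each $i$, hence $\rho(e_i\otimes v)=0$ by positive semidefiniteness, so $\rank(\rho)\le rm$) supplies exactly the justification the paper leaves implicit, so here you are more careful than the source. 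For ranks $k\le r$, the paper's Case 2 rebuilds, for each admissible $k$ separately, a set of $k$ rank-one vectors by grouping the indices $1,\dots,r$ into residue classes mod $k$ and tensoring with distinct standard basis vectors of $\IC^m$; your fixed partition into $\lceil r/m\rceil$ groups with orthonormal first factors, followed by clustering, is the same idea (orthogonality in the first tensor factor kills the cross terms in $\tr_1$), but it sweeps the entire interval with a single parametrized family rather than one construction per $k$. The genuine divergence is for ranks $k\ge r$: the paper's Case 1 is a fully explicit separable mixture that splits each eigenvalue $d_j$ evenly over $q$ or $q+1$ diagonal blocks, whereas you keep the $(1,1)$ block equal to $\sigma$ minus small positive semidefinite pieces and distribute pieces of prescribed ranks $k_i$ among the remaining diagonal blocks. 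Your perturbation construction is conceptually simpler and correct (the trace condition is automatic since $\tr(\rho)=\tr(\sigma)=1$), at the cost of explicitness ($\epsilon_i$ ``sufficiently small''), while the paper's yields concrete states. Either way, the set of attainable ranks is exactly the integer interval $[\lceil r/m\rceil,\, rm]$, and the no-gap conclusion follows immediately, as you note.
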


\begin {proof}
By Lemma \ref{lem:Connect-Extension-UST}, we may
assume $\sigma = \diag(d_1, \dots, d_r, 0, \dots, 0)$
with $d_1 \ge \cdots \ge d_r > 0$.

\marginpar{\tiny }

Suppose $\rho = ZZ^*\in \cS(\sigma)$ such that $Z$ is $mn\times k$, where $k$
is the rank of $\rho$ and $Z$ has columns $z_1, \dots, z_k \in \IC^{mn}$.
Set $Z_j = [z_j]$ for $j = 1, \dots, k$.
Then $\sigma =\tr_1(\rho) = \sum_{j=1}^k Z_j Z_j^*$ has rank at most
$mk$ because every $Z_jZ_j^*$ has rank at most $m$.
Hence,  $r/m \le k$.

Next, we consider the upper bound for $k$.
Suppose $\rho = (\rho_{ij}) \in \cS(\sigma)$
with $\rho_{ij} \in M_n$. Since
$$\sigma = \diag(d_1, \dots, d_r, 0, \dots, 0)=\rho_{11}+\rho_{22}+\cdots+\rho_{mm}\,,$$
  we have
$ \rho_{ij}  \in $ span $\{E_{pq}:1\le p,\ q\le r\}$ for all $1\le i,\ j\le m$. Hence,
$\rho = (\rho_{ij})$ has rank at most $rm$.

Finally, we show that for every $k$ between the lower and  upper bound,
there exists $\rho \in \cS(\sigma)$ with rank $k$.
Suppose $r/m \le k \le rm$. Then $\rho $ can be  constructed as follows.

\rm
{\bf Case 1.} Suppose $r<k\leq rm$ and denote $k = qr + s$ with  $0 < q < m$
and $0 < s \le  r$.
Let $$\rho=\sum_{i=1}^{q+1}\sum_{j=1}^s\ \dfrac{d_j}{q+1}E_{ii}^{(m)}\otimes E_{jj}^{(n)}+
\sum_{i=1}^q\sum_{j=s+1}^r\ \dfrac{d_j}{q}E_{ii}^{(m)}\otimes E_{jj}^{(n)}\,.$$
Then rank $\rho=(q+1)s+q(r-s)=qr+s=k$ and
$$\tr_1(\rho)=\sum_{i=1}^{q+1}\sum_{j=1}^s\ \frac{d_j}{q+1}  E_{jj}^{(n)}+
\sum_{i=1}^q\sum_{j=s+1}^r\ \frac{d_j}{q}  E_{jj}^{(n)}
= \sum_{j=1}^r\ d_jE_{jj}^{(n)}=\sigma\,.$$

{\bf Case 2.} Suppose $r/m \le k \le r \le n$,
and $r = k \hat q + \hat s$ with
$0 \le \hat q < m$ and $1 \le \hat s \le k$.
Let $f_{j}=\sqrt{d_{ j}}e^{(n)}_j$ for $1\le j\le n$, and

$$
\rho=\sum_{j=1}^s
\(\sum_{i=1}^{\hat q+1}\ e^{(m)}_i\otimes f_{(i-1)k+j}\)
\(\sum_{i=1}^{\hat q+1}\ e^{(m)}_i\otimes f_{(i-1)k+j}\)^*+
$$
$$\hskip 1in
\sum_{j=\hat s+1}^{k}\(\sum_{i=1}^{\hat q}\ e^{(m)}_i\otimes f_{(i-1)k+j}\)
\(\sum_{i=1}^{\hat q}\ e^{(m)}_i\otimes f_{(i-1)k+j}\)^*
$$
Then rank $\rho=\hat  s+(k-\hat s)= k$ and
$$\tr_1(\rho)= \sum_{j=1}^{\hat s}\sum_{i=1}^{\hat q+1}\ d_{(i-1)k+j}E^{(n)} _{(i-1)k+j
\, \ (i-1)k+j}+
\sum_{j=\hat s+1}^{k}\sum_{i=1}^{\hat q}\
d_{(i-1)k+j}E^{(n)} _{(i-1)k+j \, \  (i-1)k+j}
=\sum_{\ell=1}^{r} d_{\ell}E^{(n)} _{\ell\,\ell}=\sigma\,.$$
{\hfill ~} \end{proof}

By Theorem \ref{pro:rank1}, we have the following corollary, which is part of Theorem
\ref{pro:appro1}.

\begin{corollary} \label{cor:2.4}
Suppose $\sigma \in D(n)$ has rank $r$. Then there is $\rho \in D({m\cdot n})$
with rank not larger than $k$ such that $\tr_1(\rho) = \sigma$ if
and only if $km \ge r$. In particular, there  $\sigma$ has a purification
$\rho \in D(m\cdot n)$ if and only if $m \ge r$.
\end{corollary}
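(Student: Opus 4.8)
The plan is to read the corollary off directly from Theorem~\ref{pro:rank1}, which already records the complete set of ranks attainable by elements of $\cS(\sigma)$. First I would observe that the phrase ``rank not larger than $k$'' asks precisely whether \emph{some} element of $\cS(\sigma)$ has rank $\le k$. By Theorem~\ref{pro:rank1}, the ranks realized in $\cS(\sigma)$ form the contiguous block of integers $\{\lceil r/m\rceil, \dots, rm\}$; consequently an element of rank $\le k$ exists if and only if the minimal attainable rank satisfies $\lceil r/m\rceil \le k$.

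Next I would convert this into the stated arithmetic condition. Since $k$ is an integer, $\lceil r/m\rceil \le k$ holds if and only if $r/m \le k$, i.e.\ $r \le km$. For the forward implication, when $km \ge r$ one has $\lceil r/m\rceil \le k$, and also $\lceil r/m\rceil \le r \le rm$, so Theorem~\ref{pro:rank1} produces a $\rho \in \cS(\sigma)$ of rank exactly $\lceil r/m\rceil$, which is at most $k$. For the converse, when $km < r$ the lower bound in Theorem~\ref{pro:rank1} forces every $\rho \in \cS(\sigma)$ to have rank at least $\lceil r/m\rceil > k$, so no element of rank $\le k$ can exist.

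Finally I would specialize to purifications: a purification of $\sigma$ is a pure state, i.e.\ a $\rho$ of rank $1$, which is the case $k = 1$. The condition $km \ge r$ then reduces to $m \ge r$, giving the last assertion.

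The entire content is subsumed by Theorem~\ref{pro:rank1}, so there is no genuine obstacle here; the only points demanding a line of care are the integrality equivalence $\lceil r/m\rceil \le k \Leftrightarrow r \le km$ and the remark that attainability of rank $\le k$ is equivalent to attainability of the \emph{minimal} rank, which is exactly what the contiguity clause of Theorem~\ref{pro:rank1} licenses.
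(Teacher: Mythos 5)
Your proposal is correct and follows exactly the paper's route: the paper also derives this corollary directly from Theorem~\ref{pro:rank1} (the attainable ranks in $\cS(\sigma)$ being precisely $\lceil r/m\rceil,\dots,rm$), with the same integrality observation $\lceil r/m\rceil \le k \Leftrightarrow r \le km$ and the specialization $k=1$ for the purification statement. No gaps; the care you flag about contiguity and the minimal rank is precisely what makes the deduction immediate.
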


Next, we consider the extreme points of the set $\cS(\sigma)$.
We begin with some general observations.

\begin{lemma} \label{lem1}
Let $\sigma\in D(n)$ and let $\rho \in
\cS(\sigma) \subseteq D(m\cdot n)$. Then
$\rho$ is not an extreme point if and only if
there exists a nonzero $\xi\in H_{mn}$ such that
$\rho \pm \xi \in PSD(m\cdot n)$ and $\tr_1(\xi) = O_n$.
In such a case, there are $\rho_1, \rho_2 \in \cS(\sigma)$
with $\rank(\rho_1) < \rank(\rho)$ such that
$\rho = (\rho_1+\rho_2)/2$.
\end{lemma}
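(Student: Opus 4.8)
The plan is to prove the characterization of non-extreme points via a perturbation argument, which is the standard technique for describing extreme points of compact convex sets defined by affine constraints. The key observation is that $\cS(\sigma)$ is cut out inside $PSD(m\cdot n)$ by the affine condition $\tr_1(\rho)=\sigma$, so a point fails to be extreme precisely when it admits a nonzero feasible perturbation direction.

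First I would prove the ``if'' direction. Suppose such a nonzero $\xi\in H_{mn}$ exists with $\rho\pm\xi\in PSD(m\cdot n)$ and $\tr_1(\xi)=O_n$. Setting $\rho_\pm=\rho\pm\xi$, both lie in $PSD(m\cdot n)$ by hypothesis, and since $\tr_1$ is linear we get $\tr_1(\rho_\pm)=\tr_1(\rho)\pm\tr_1(\xi)=\sigma$, so in particular $\tr(\rho_\pm)=\tr(\sigma)=1$ and hence $\rho_\pm\in\cS(\sigma)$. As $\rho=\tfrac12(\rho_++\rho_-)$ with $\rho_+\neq\rho_-$, the point $\rho$ is not extreme.

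Next I would prove the ``only if'' direction. If $\rho$ is not extreme, write $\rho=\tfrac12(\rho_1'+\rho_2')$ with $\rho_1',\rho_2'\in\cS(\sigma)$ distinct, and set $\xi=\tfrac12(\rho_1'-\rho_2')$; this $\xi$ is a nonzero Hermitian matrix with $\rho\pm\xi=\rho_{1,2}'\in PSD(m\cdot n)$, and $\tr_1(\xi)=\tfrac12(\sigma-\sigma)=O_n$, giving exactly the stated perturbation.

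The substantive part, and what I expect to be the main obstacle, is the final rank claim: that one can arrange $\rho=\tfrac12(\rho_1+\rho_2)$ with $\rank(\rho_1)<\rank(\rho)$. The idea is to scale the perturbation direction to the boundary. Given a nonzero $\xi$ with $\tr_1(\xi)=O_n$ and $\rho\pm\xi\in PSD(m\cdot n)$, the set of real $t$ with $\rho+t\xi\in PSD(m\cdot n)$ is a closed interval $[t_-,t_+]$ containing $[-1,1]$; I would take $t_*=t_+>0$ so that $\rho_1=\rho+t_*\xi$ sits on the boundary of $PSD(m\cdot n)$, forcing $\rho_1$ to be singular on a subspace where $\rho$ was not, whence $\rank(\rho_1)<\rank(\rho)$. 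Concretely, writing the perturbation in the eigenbasis of $\rho$, increasing $t$ until the smallest eigenvalue first hits zero kills at least one positive eigenvalue, strictly dropping the rank; one then defines $\rho_2=\rho-t_*'\xi$ for the matching $t_*'>0$ so that $\rho=(\rho_1+\rho_2)/2$, after rescaling $\xi$ so the two boundary hits are symmetric, i.e. replacing $\xi$ by $\tfrac{t_+-t_-}{2}\,\xi^{\mathrm{old}}$ and recentering. Care must be taken to confirm that at least one of the two boundary endpoints yields a strict rank decrease (it does, since hitting the positive-semidefinite boundary means a previously positive eigenvalue vanishes), and that throughout $\tr_1$ is preserved so the constructed matrices remain in $\cS(\sigma)$.
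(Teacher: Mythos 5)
Your ``if'' and ``only if'' directions are correct and coincide exactly with the paper's argument, and your overall plan for the rank claim (push the perturbation direction to the boundary of the cone) is also the paper's plan. But your execution of that last part has a genuine gap. You assert that when $t$ reaches the endpoint $t_+$ of the feasibility interval, $\rho_1=\rho+t_+\xi$ ``is singular on a subspace where $\rho$ was not,'' so its rank drops. However, $\rho$ itself is in general singular (its rank $r$ can be far below $mn$), so merely landing on the boundary of $PSD(m\cdot n)$ --- i.e., being singular --- says nothing about rank relative to $\rho$. What rescues the claim is the observation that is precisely the centerpiece of the paper's proof: $\rho\pm\xi\in PSD(m\cdot n)$ forces $\ker(\rho)\subseteq\ker(\xi)$ (if $\rho v=0$, then $v^*(\rho\pm\xi)v=\pm v^*\xi v\ge 0$ gives $v^*\xi v=0$, and then positive semidefiniteness of $\rho+\xi$ forces $\xi v=0$). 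Hence, writing $\rho=\sum_{j=1}^r\lambda_j z_jz_j^*$ with $\{z_j\}$ orthonormal, one gets $\xi=\sum_{i,j}h_{ij}z_iz_j^*$ with $0\ne H=(h_{ij})\in H_r$, so the whole pencil is $\rho+t\xi=Z(\Lambda+tH)Z^*$ with $\Lambda=\diag(\lambda_1,\dots,\lambda_r)$ positive definite and $Z$ having columns $z_1,\dots,z_r$. Only after this compression to the range of $\rho$ does ``first exit from the cone'' imply that $\Lambda+t_+H$ is singular, i.e.\ $\rank(\rho+t_+\xi)<r$. (Relatedly, the boundedness of your interval $[t_-,t_+]$ also needs a word: it holds because $\tr\xi=\tr(\tr_1(\xi))=0$ and $\xi\ne 0$, so $\xi$ has a negative eigenvalue.)

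Your symmetrization step is also faulty as written. You need $\rho$ itself to be the midpoint, $\rho=(\rho_1+\rho_2)/2$, but ``replacing $\xi$ by $\frac{t_+-t_-}{2}\xi$ and recentering'' does not deliver this: rescaling $\xi$ by a positive scalar never makes the interval $[t_-,t_+]$ symmetric about $0$, and recentering produces a midpoint decomposition of the shifted matrix $\rho+\frac{t_++t_-}{2}\xi$, not of $\rho$. The correct repair, which is what the paper's condition ``either $\rho+t\xi$ or $\rho-t\xi$ has rank $<r$'' encodes, is to take $t_*=\min(t_+,-t_-)\ge 1$ and set $\rho_1=\rho+t_*\xi$, $\rho_2=\rho-t_*\xi$, flipping the sign of $\xi$ if necessary so that the boundary is attained on the $+$ side. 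Then both matrices lie in $\cS(\sigma)$, $\rho=(\rho_1+\rho_2)/2$, and the endpoint matrix has rank $<r$; the lemma only asserts that one of the two drops rank, so this suffices.
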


\it Proof. \rm If $\rho \in \cS(\sigma)$ is not extreme, then
there are two different elements $\rho_1, \rho_2 \in \cS(\sigma)$ such that
$\rho = (\rho_1+\rho_2)/2$. Let $\xi = (\rho_1-\rho_2)/2\ne 0$.
Then $\rho\pm \xi \in \cS(\sigma)$ so that
$\rho\pm \xi \in PSD(m\cdot n)$ and
$\tr_1(\xi) = \tr_1(\rho_1-\rho_2)/2 =(\sigma-\sigma)/2 = O_n$.
Conversely, if $\xi \in H_{mn}$ satisfies
$\rho \pm \xi \in D(m\cdot n)$ and $\tr_1(\xi) = O_n$, then
we can set $\rho_{\pm} = \rho \pm \xi$ so that
$\rho_+, \rho_- \in \cS(\sigma)$ and
$\rho = (\rho_++\rho_-)/2$.

Now, if $\rho\in \cS(\sigma)$ has rank $r$ and is not an extreme point.
Then we can choose an orthonormal set $\{z_1,\dots,z_r\}$ in $\IC^{mn}$ such that
$\rho= \sum_{j=1}^r\lambda_j z_j z_j^*$.
Suppose a nonzero $\xi \in H_{mn}$
is such that
$\rho\pm \xi \in PSD(m\cdot n)$ and $\tr_1(\xi) = O_m$.
Then $\xi = \sum_{1 \le i, j \le r} h_{ij} z_i z_j^*$ for some non-zero $\(h_{ij}\) \in H_{r}$.
Thus there exists $t>0$ such that
\begin{itemize}
\item[{\rm 1)}]
$\rho\pm t\xi \in PSD(m\cdot n) $,
\item[{\rm 2)}] either $\rho_1=\rho+ t\xi$ or $\rho_2=\rho- t\xi$ has rank $<r$, and
\item[{\rm 3)}] $\rho = (\rho_1+\rho_2)/2$, with $\rho_1, \rho_2 \in \cS(\sigma)$.
\end{itemize}
The last assertion follows.
\qed

\begin{theorem} \label{extreme} Suppose $\rho \in \cS(\sigma)$ for a given
$\sigma \in D(n)$ such that $\rho$ has rank $r$ and $\rho = ZZ^* \in D(m\cdot n)$, where
$Z$ has columns $z_1, \dots, z_r \in \IC^{mn}$.
Then $\rho$ is an extreme points of $\cS(\sigma)$ if and only if
the set $T(z_1,\dots, z_r) = \left\{[z_i][z_j]^*: 1 \le i, j \le r\right\}$ is linearly independent.
\end{theorem}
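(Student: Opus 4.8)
```latex
\textbf{Proof proposal.}
The plan is to use the characterization of non-extreme points from Lemma~\ref{lem1}: the matrix $\rho = ZZ^*$ fails to be extreme exactly when there is a nonzero perturbation $\xi \in H_{mn}$ supported on $\Range(\rho)$ with $\rho \pm \xi \in PSD(m\cdot n)$ and $\tr_1(\xi) = O_n$. Since every admissible $\xi$ can be written $\xi = \sum_{1 \le i,j \le r} h_{ij} z_i z_j^*$ for some Hermitian $(h_{ij}) \in H_r$ (as was observed in the proof of Lemma~\ref{lem1}), the whole question reduces to understanding the single linear constraint $\tr_1(\xi) = O_n$. So the first step I would carry out is to translate $\tr_1(\xi) = O_n$ into a condition on the Hermitian matrix $H = (h_{ij})$.

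The key computation is to expand $\tr_1(\xi)$ using the notation $[\,\cdot\,]$ and the identity $\tr_1(ww^*) = [w][w]^*$ recorded in Section~1. By bilinearity and the polarization-type expansion, I expect
$$\tr_1(\xi) = \sum_{1 \le i,j \le r} h_{ij}\,\tr_1(z_i z_j^*) = \sum_{1 \le i,j \le r} h_{ij}\, [z_i][z_j]^*,$$
where $\tr_1(z_i z_j^*) = [z_i][z_j]^*$ is the natural off-diagonal analogue of the diagonal identity. Granting this, the constraint $\tr_1(\xi) = O_n$ becomes precisely the vanishing of a Hermitian linear combination of the matrices in $T(z_1,\dots,z_r)$. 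Therefore a nonzero Hermitian $H$ with $\tr_1(\xi)=O_n$ exists if and only if the matrices $\{[z_i][z_j]^*\}$ admit a nontrivial Hermitian linear dependence, which (because the family is closed under conjugate transpose, the $(i,j)$ and $(j,i)$ terms being adjoints) is equivalent to $T(z_1,\dots,z_r)$ being linearly dependent over $\IC$.

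The remaining step is to connect the \emph{existence} of a dependence to the \emph{existence} of an admissible $\xi$ in the sense of Lemma~\ref{lem1}, i.e.\ one that can actually be scaled so that $\rho \pm t\xi \in PSD(m\cdot n)$. Here I would argue that since $\xi$ is supported on $\Range(\rho)$ and $\rho$ is strictly positive on that range, any nonzero such $\xi$ can be scaled by a sufficiently small $t > 0$ to keep $\rho \pm t\xi$ positive semidefinite; this is exactly the maneuver used at the end of the proof of Lemma~\ref{lem1}. Thus a nontrivial Hermitian dependence yields a genuine witness to non-extremality, and conversely every witness produces such a dependence. Combining the two directions: $\rho$ is non-extreme iff $T(z_1,\dots,z_r)$ is linearly dependent, which is the contrapositive of the claim.

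The step I expect to be the main obstacle is the off-diagonal identity $\tr_1(z_i z_j^*) = [z_i][z_j]^*$ and, more subtly, the passage between \emph{complex} linear dependence of the family $T$ and \emph{Hermitian} (real-linear) dependence among the perturbations $\xi$. One must check carefully that a complex linear dependence $\sum c_{ij}[z_i][z_j]^* = O_n$ can be symmetrized into a Hermitian one without collapsing to the trivial relation, using that $[z_j][z_i]^* = ([z_i][z_j]^*)^*$; this bookkeeping, rather than any deep idea, is where the care is needed.
```
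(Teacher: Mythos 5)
Your proposal is correct and follows essentially the same route as the paper's own proof: the paper also reduces non-extremality to the existence of a nonzero Hermitian $H$ with $\sum_{i,j} h_{ij}[z_i][z_j]^* = O_n$, using the same symmetrization $e^{it}H + e^{-it}H^*$ to pass from a complex dependence to a Hermitian one, the same scaling $\|tH\| < 1$ to get $\rho \pm tZHZ^* = Z(I_r \pm tH)Z^* \in PSD(m\cdot n)$, and the same range-containment argument (range of the perturbation lies in the column space of $Z$) for the converse. The only cosmetic difference is that you factor the argument explicitly through Lemma~\ref{lem1}, whereas the paper reproduces those ingredients inline; the identity $\tr_1(z_iz_j^*) = [z_i][z_j]^*$ that you flag does hold and is a routine block computation (or polarization from $\tr_1(ww^*)=[w][w]^*$).
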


\it Proof. \rm Suppose $T(z_1, \dots, z_r)$ is linearly dependent.
Then there is $H = (h_{ij}) \in M_r$ such that
$$\sum_{i,j} h_{ij}[z_i][z_j]^*  = 0.$$
Let $[z_j] = Z_j$ for $j = 1, \dots, r$.
Then $[Z_1 \cdots Z_r] (H \otimes I_m)[Z_1 \cdots Z_r]^* = 0$.
We may replace $H$ by $e^{it}H + e^{-it}H^*$ for a suitable
$t \in [0, 2\pi)$ and assume that $0 \ne H = H^*$.
Then for $t > 0$ such that $\|t H\| < 1$,
$\rho_{\pm} = \rho \pm tZHZ^* = Z(I_r\pm tH)Z^*$ is positive semidefinite.
Moreover,
$$\tr_1(\rho_{\pm}) = \sum_{j=1}^r Z_jZ_j^* \pm
t\sum_{1 \le i, j \le r} h_{ij} Z_iZ_j^* = \sigma$$
and $\tr(\rho_{\pm}) = \tr(\sigma) = 1$. Thus, $\rho_{\pm} \in \cS(\sigma)$
are two different elements such that $\rho = (\rho_{+} + \rho_{-})/2$.
Hence, $\rho$ is not an extreme points.

Conversely, if $\rho$ is not an extreme point of $\cS(\sigma)$, then
$\rho = (\rho_+ + \rho_-)/2$ for two different elements $\rho_+ ,\ \rho_-$ in $\cS(\sigma)$.
Then $\rho_+ -\rho = \rho - \rho_- = \tilde H\ne 0$ so that
$\rho_+ = \rho + \tilde H = ZZ^* + \tilde H \in \cS(\sigma)$
and $\rho_- = \rho - \tilde H = ZZ^* - \tilde H \in \cS(\sigma)$.
Thus, the range space of $\tilde H$ is a subspace of the range space of
$\rho$, which is the column space of $Z$. Thus,
$\tilde H$ has the form $ZHZ^*$ for some $0 \ne (h_{ij}) =
H = H^*\in M_r$ so that $I_r \pm H$
are positive semidefinite. Moreover,
$$\tr_1(\rho_{\pm}) = \tr_1(\rho \pm \tilde H) = \tr_1(\rho)= \sigma.$$
It follows that $0 = \tr_1(\tilde H) = \sum_{ij} h_{ij} Z_i Z_j^*$.
Hence, $T(z_1, \dots, z_r)$ is linearly dependent. \qed

Next, we determine all possible ranks of the extreme points of $\cS(\sigma)$.

\begin{theorem}\label{the:Partial-extreme-point}
Suppose $\sigma\in D(n)$ and $\rank(\sigma)=r.$
There is an extreme point $\rho \in \cS(\sigma) \subseteq D(m\cdot n)$
with rank $k$ if and only if
$$\lceil r/m \rceil \le k \le r.$$
Moreover, every $\rho \in \cS(\sigma)$ with rank equal to $\lceil r/m \rceil$
is an extreme point.
For $\lceil r/m \rceil < k  \le  r$, there exists $\rho \in \cS(\sigma)$ which is
not an extreme point.
\end{theorem}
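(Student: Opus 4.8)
The plan is to establish the rank bounds $\lceil r/m \rceil \le k \le r$ for extreme points, then prove the two extremal-case assertions using Theorem \ref{extreme}. The lower bound is immediate: by the argument in the proof of Theorem \ref{pro:rank1}, any $\rho \in \cS(\sigma)$ of rank $k$ satisfies $r \le mk$, hence $k \ge \lceil r/m \rceil$; this applies in particular to extreme points. For the upper bound, I would use a dimension count on the linear independence criterion of Theorem \ref{extreme}. If $\rho = ZZ^*$ has rank $k$ and is extreme, then the $k^2$ matrices $[z_i][z_j]^*$ (each an $n \times n$ matrix) are linearly independent. But every $\tr_1(\xi)$ vanishing condition forces these matrices to live in a restricted space: since $\tr_1(\rho) = \sum_j [z_j][z_j]^* = \sigma$ has rank $r$, each $[z_j]$ has column space inside $\Range(\sigma)$, so each $[z_i][z_j]^*$ lies in the $r^2$-dimensional space spanned by $\{E_{pq} : 1 \le p,q \le r\}$ (after the reduction $\sigma = \diag(d_1,\dots,d_r,0,\dots,0)$ via Lemma \ref{lem:Connect-Extension-UST}). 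Linear independence of $k^2$ elements in an $r^2$-dimensional space gives $k^2 \le r^2$, i.e. $k \le r$.

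Next I would verify attainability of every $k$ in the range, producing extreme points. For $k = \lceil r/m \rceil$ up through $k = r$, I would exhibit explicit constructions analogous to Case 2 in the proof of Theorem \ref{pro:rank1}, but chosen so that the vectors $z_1,\dots,z_k$ have disjoint supports in the $\Range(\sigma)$ coordinates. When the $[z_j]$ have pairwise orthogonal (disjoint-support) column and row structure, the products $[z_i][z_j]^*$ are supported on disjoint blocks of entries and are therefore automatically linearly independent, so Theorem \ref{extreme} certifies extremality. The construction in Case 2 of Theorem \ref{pro:rank1} already assigns each basis vector $f_\ell = \sqrt{d_\ell}\,e_\ell^{(n)}$ to exactly one $z_j$, which gives precisely this disjoint-support property; I would check that $T(z_1,\dots,z_k)$ is independent there, handling the general $k$ by the same block-diagonal bookkeeping.

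The two ``moreover'' statements are then the main content. For the claim that \emph{every} $\rho$ of rank $\lceil r/m \rceil$ is extreme: here $k = \lceil r/m \rceil$ is the minimum rank, so $mk \ge r$ but $m(k-1) < r$. The key point is that $\tr_1(\rho) = \sum_{j=1}^k [z_j][z_j]^* = \sigma$ has rank exactly $r$, and each summand has rank at most $m$; the rank-$r$ total forces the ranges of the $[z_j]$ to be ``as independent as possible.'' I would argue that if $\sum_{ij} h_{ij}[z_i][z_j]^* = 0$ with $H = H^*$, then testing against $\Range(\sigma)$ and using that the $[z_j][z_j]^*$ must sum to a rank-$r$ matrix with $k$ blocks of near-maximal rank forces $H = 0$; equivalently the minimal-rank condition leaves no room for a nontrivial perturbation direction $\xi = ZHZ^*$ with $\tr_1(\xi) = 0$, which by Lemma \ref{lem1} means $\rho$ is extreme. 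For the final statement, that some non-extreme $\rho$ of rank $k$ exists whenever $\lceil r/m \rceil < k \le r$: I would start from an extreme point of rank $\lceil r/m\rceil$ (or a low rank element) and add a small multiple of a perturbation $\xi = ZHZ^*$ with $\tr_1(\xi)=O_n$ and $H$ chosen to raise the rank to $k$, guaranteed to exist precisely because $k$ exceeds the minimum, so that $T(z_1,\dots,z_k)$ becomes dependent by construction.

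The hard part will be the ``every rank-$\lceil r/m\rceil$ element is extreme'' claim, since it is a statement about \emph{all} such $\rho$ rather than an explicit construction. The obstacle is converting the rank constraint ``$\sum_{j=1}^k [z_j][z_j]^* = \sigma$ has rank $r$ with $k$ nearly-full-rank blocks'' into the algebraic independence of $\{[z_i][z_j]^*\}$. I expect the cleanest route is to show directly that any Hermitian $H$ with $\sum_{ij} h_{ij}[z_i][z_j]^* = 0$ must vanish, by exploiting that the column spaces of the $[z_j]$ together span the full $r$-dimensional $\Range(\sigma)$ with total dimension count $mk \ge r$ being tight enough to preclude the overlaps that a nonzero $H$ would require; a careful reduction to the block structure after diagonalizing $\sigma$ should make this a finite linear-algebra verification rather than a genuinely new difficulty.
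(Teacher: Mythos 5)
The first half of your proposal is correct and coincides with the paper's own argument: the lower bound $k\ge\lceil r/m\rceil$ comes from Theorem \ref{pro:rank1}; the upper bound comes from your dimension count (after diagonalizing $\sigma$ via Lemma \ref{lem:Connect-Extension-UST}, the $k^2$ matrices $[z_i][z_j]^*$ lie in the $r^2$-dimensional span of $\{E_{pq}:1\le p,q\le r\}$, so extremality forces $k\le r$); and the Case~2 construction of Theorem \ref{pro:rank1} does yield extreme points of every rank in the range, because its disjoint-support structure makes $T(z_1,\dots,z_k)$ linearly independent, exactly as you say. The two ``moreover'' claims are where the genuine gaps lie. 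For the claim that \emph{every} element of rank $\lceil r/m\rceil$ is extreme, you only express the hope that minimality ``forces $H=0$'' and defer the argument; this is precisely what must be proved, and you overlooked that Lemma \ref{lem1} already supplies the tool: its second assertion says a non-extreme $\rho$ equals $(\rho_1+\rho_2)/2$ with $\rho_1\in\cS(\sigma)$ of rank \emph{strictly smaller} than $\rank(\rho)$, which is impossible when $\rank(\rho)=\lceil r/m\rceil$ is the minimum rank in $\cS(\sigma)$ by Theorem \ref{pro:rank1} -- a two-line proof. (Your direct route can in fact be completed: diagonalize $H=U\diag(\delta_1,\dots,\delta_k)U^*$, set $w_j=\sum_i u_{ij}z_i$; a nontrivial relation $\sum_j\delta_j[w_j][w_j]^*=0$ forces $\Range(\sigma)$ into the sum of the column spaces of at most $k-1$ of the $[w_j]$, each of dimension at most $m$, so $r\le m(k-1)$, contradicting $k=\lceil r/m\rceil$. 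But this argument is absent from your write-up.)

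The final claim is where your proposed construction would actually fail. You want to take an extreme point $\rho=ZZ^*$ of minimal rank and add a small multiple of $\xi=ZHZ^*$ with $\tr_1(\xi)=O_n$, ``chosen to raise the rank to $k$.'' This cannot work for two separate reasons: first, $\rho+t\xi=Z(I+tH)Z^*$ has range contained in the column space of $Z$, so its rank can never exceed $\rank(\rho)$ -- no perturbation of this form raises rank; second, if $\rho$ is extreme, then by Theorem \ref{extreme} the set $T(z_1,\dots,z_k)$ is linearly independent, so $\tr_1(ZHZ^*)=\sum_{i,j}h_{ij}[z_i][z_j]^*=O_n$ forces $H=0$, i.e.\ no nonzero admissible perturbation of this form exists at all. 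The paper uses a different device: apply Case~2 of Theorem \ref{pro:rank1} with $k-1$ vectors to get $\tilde\rho=\sum_{j=1}^{k-1}z_jz_j^*\in\cS(\sigma)$; since $k-1<r$, the matrix $[z_1]$ has at least two nonzero columns, so one may replace $z_1z_1^*$ by $\tilde z_1\tilde z_1^*+z_kz_k^*$, where $\tilde z_1=z_1/\sqrt2$ and $[z_k]$ is $[\tilde z_1]$ with its first column negated. This preserves the partial trace, raises the rank to $k$ (as $\tilde z_1$ and $z_k$ are linearly independent), and gives $[\tilde z_1][\tilde z_1]^*=[z_k][z_k]^*$, so $T$ is linearly dependent and the resulting state is not extreme by Theorem \ref{extreme}. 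Some such splitting construction is needed; the perturbation idea as you stated it is a dead end.
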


\begin{proof}
By Lemma \ref{lem:Connect-Extension-UST}, we may assume that  $\sigma = \diag(d_1, \dots, d_r, 0, \dots, 0)$
with $d_1 \ge \cdots \ge d_r > 0$.

\medskip\noindent
(1)  We show that any $\rho\in\cS(\sigma)$ with $\rank(\rho)=k > r$ is not
an extreme point.

Suppose $\rho=z_1z_1^*+\cdots+z_kz_k^*$. Let $Z_i = [z_i]$
for $i = 1, \dots, k$. Then $\sum_{j=1}^k Z_jZ_j^* = \sigma$.
It follows that the last $n-r$ rows of $Z_i$ are zero for $i = 1, \dots, k$.
Thus, $Z_iZ_j^* = C_{ij} \oplus O_{n-r}$ for some
$C_{ij} \in M_r$. Thus, $\{Z_iZ_j^*: 1 \le i, j \le k\}$
is linearly dependent as $k^2 > r^2$.
By Theorem \ref{extreme}, $\rho$ is not an extreme point.

\medskip
\noindent
(2) Suppose $r/m \le k \le r$.  We show that there is
an extreme point $\rho\in\cS(\sigma)$ with  $\rank(\rho)=k$.

Because $r/m \le k \le r$, we can let  $r=k\hat q+\hat s,$
and use the construction in Case 2 in the proof of Theorem
\ref{pro:rank1}
to obtain $\rho = \sum_{j=1}^k  z_jz_j^*$.
Note that for $1 \le i, j \le k$,
$[z_i][z_j]^*$ has the form
$\sqrt{\lambda_i\lambda_j} E_{ij}{(k)} \oplus Y_{ij}$.
Thus, $\{[z_i][z_j]^*: 1 \le i, j \le k\}$
is linearly independent, and $\rho$ is an extreme point.

\medskip
\noindent
(3) Suppose $\lceil r/m \rceil < k \le r$.  We show that there is
 $\rho\in\cS(\sigma)$ with  $\rank(\rho)=k$ such that
 $\rho$ is not an extreme point.

Because $\lceil r/m\rceil < k \le r \le n$,
we may use the the construction
in  Case 2 in the proof of Theorem
\ref{pro:rank1}, with $k$ replaced by $k-1$ to get
$\tilde \rho = \sum_{j=1}^{k-1} z_j z_j^*$ such that
$\tr_1(\tilde \rho) = \sigma$. Since $k-1<r$,
  $Z_1 = [z_1]$ has two nonzero columns.
Replace $z_1$ by $\tilde z_1 = z_1/\sqrt{2}$
and construct $z_k$ so that $Z_k = [z_k]$ is obtained from
$[\tilde z_1]$ by multiplying its first column by $-1$.
Then $\rho = \tilde z_1 \tilde z_1^* + \sum_{j=2}^k z_j z_j^* \in
\cS(\sigma)$ has rank $k$. Note that $[\tilde z_1][\tilde z_1]^* =
[z_k][z_k]^*$ so that
$T(\tilde z_1, z_2, \dots, z_k)$ is linearly dependent.
So, $\rho$ is not extreme.

\medskip\noindent
(4) We show that if $\rho\in\cS(\sigma)$ has rank
$k=\lceil r/m \rceil$, then $\rho$ is an extreme point.

If $\rho$ is not an extreme point, then by Lemma \ref{lem1}
$\rho_1, \rho_2 \in \cS(\sigma)$ with $\rank(\rho_1) < \rank(\rho)$
such that $\rho = (\rho_1+\rho_2)/2$,
which is a contradiction.
\end{proof}

\begin{corollary}
Suppose $\sigma \in D(n)$ and $\rho \in \cS(\sigma) \subseteq D(m\cdot n)$.
\begin{enumerate}
\item[{\rm (a)}] If $\rho$ has rank one, then $\rho$ is an extreme point of $\cS(\sigma)$.
\item[{\rm (b)}] If $\rho$ has rank $k > n$, then $\rho$ is not an extreme point.
\end{enumerate}
\end{corollary}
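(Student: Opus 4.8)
The plan is to derive both parts directly from the two structural results already established: the linear-independence criterion for extreme points (Theorem \ref{extreme}) and the determination of the possible ranks of extreme points (Theorem \ref{the:Partial-extreme-point}).

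For part (a), I would write the rank-one state as $\rho = z_1 z_1^*$ with $0 \ne z_1 \in \IC^{mn}$, so that the collection appearing in Theorem \ref{extreme} is the singleton $T(z_1) = \{[z_1][z_1]^*\}$. Since $[\,\cdot\,]$ is a linear isomorphism from $\IC^{mn}$ onto $M_{n,m}$, the hypothesis $z_1 \ne 0$ gives $[z_1] \ne 0$ and hence $[z_1][z_1]^* \ne 0$. A single nonzero matrix is a linearly independent set, so Theorem \ref{extreme} immediately yields that $\rho$ is an extreme point of $\cS(\sigma)$.

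For part (b), I would use the fact that $r := \rank(\sigma) \le n$, since $\sigma \in D(n) \subseteq M_n$. Consequently the hypothesis $k = \rank(\rho) > n$ forces $k > r$. By the upper bound in Theorem \ref{the:Partial-extreme-point} --- more precisely, by the argument in step (1) of its proof, which shows that any $\rho \in \cS(\sigma)$ of rank exceeding $r$ yields a linearly dependent family $\{[z_i][z_j]^*\}$ (there being $k^2 > r^2$ such matrices, all supported on a common $r \times r$ block) --- such a $\rho$ cannot be an extreme point.

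Neither part presents a genuine obstacle; both are short consequences of the preceding theorems. The only points requiring a moment's care are the injectivity of the map $[\,\cdot\,]$ in part (a), so that a nonzero vector yields a nonzero (hence trivially linearly independent) matrix, and the observation $\rank(\sigma) \le n$ in part (b), which upgrades the hypothesis ``rank greater than $n$'' to ``rank greater than $r$'' --- the precise form in which Theorem \ref{the:Partial-extreme-point} applies.
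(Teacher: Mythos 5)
Your proposal is correct, and part (a) coincides exactly with the paper's argument: a single nonzero matrix $[z][z]^*$ is a linearly independent set, so Theorem \ref{extreme} applies. For part (b) the paper takes a slightly different, more self-contained route than yours: it applies Theorem \ref{extreme} directly, noting that $T(z_1,\dots,z_k)$ consists of $k^2$ matrices living in the space of $n\times n$ matrices, whose dimension is $n^2 < k^2$, so the set is automatically linearly dependent --- no reference to $\rank(\sigma)$ or to Theorem \ref{the:Partial-extreme-point} is needed. Your version instead upgrades the hypothesis $k>n$ to $k>r$ via $r=\rank(\sigma)\le n$ and then invokes the rank bound for extreme points (step (1) of the proof of Theorem \ref{the:Partial-extreme-point}, which does the same dimension count but inside the common $r\times r$ block after diagonalizing $\sigma$). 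The two arguments are the same counting idea applied with different dimension bounds: yours buys a sharper conclusion (any $\rho$ of rank exceeding $r$, not merely exceeding $n$, fails to be extreme), while the paper's buys independence from the heavier theorem and from the unitary reduction to diagonal $\sigma$. Both are complete proofs.
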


\it Proof. \rm (a) If $\rho = zz^*$, then $\{[z][z]^*\}$ is linearly independent.
So, $\rho$ is an extreme point.

(b) If $\rho = ZZ^*$, where $Z$ has linearly independent columns $z_1, \dots, z_k$,
then $T(z_1, \dots, z_k) \subseteq M_{n^2}$
has $k^2$ elements with $k^2 > n^2$, and hence is a linearly dependent set in
$M_{n^2}$. So, $\rho$ is not an extreme point. \qed

\section{Eigenvalues}

As mentioned in the introduction, even though we know the inequalities governing
the eigenvalues of $\rho \in D({m\cdot n})$ and those of
$\sigma_2 = \tr_1(\rho)$ and $\sigma_1 = \tr_2(\rho)$, it is not easy to
use them to determine the relations between the eigenvalues of $\rho$
and $\tr_1(\rho)$ (without specifying those of $\tr_2(\rho)$).
We have the following result.

\begin{theorem} \label{mn1} Suppose $m, n \ge 2$,
$\lambda_1 \ge \cdots \ge \lambda_n \ge 0$
and $\mu_1 \ge \cdots \ge \mu_{mn} \ge 0$ satisfy $\sum_{j=1}^n \lambda_j = 1
= \sum_{j=1}^{mn} \mu_j$.
\begin{itemize}
\item[{\rm (a)}]
If there exist $\sigma \in D(n)$ with eigenvalues $\lambda_1, \dots, \lambda_n$
and  $\rho \in S(\sigma) \subseteq D(m\cdot n)$ with eigenvalues
$\mu_1, \dots, \mu_{mn}$, then
\begin{equation}\label{major-1}
\left(\frac{\lambda_1}{m},\cdots,\frac{\lambda_1}{m},\frac{\lambda_2}{m},\cdots,
\frac{\lambda_2}{m},\cdots,\frac{\lambda_n}{m},\right)\prec
\left( \mu_1,\mu_{2}, \dots,   \mu_{ n m  }\right),
\end{equation}
so that
\begin{equation}\label{major}
(\lambda_1, \dots, \lambda_n) \prec \left(\sum_{j=1}^m \mu_j,
\sum_{j=1}^m \mu_{m+j}, \dots, \sum_{j=1}^m \mu_{(n-1)m+j }\right),
\end{equation}
and setting $\lambda_j = 0$ for $j > n$, we have
\begin{equation}\label{major2}
(\mu_1, \dots, \mu_{mn}) \prec
\left(\sum_{j=1}^m \lambda_j,
\sum_{j=1}^m \lambda_{m+j}, \dots, \sum_{j=1}^m \lambda_{m^2n-m+j }\right).
\end{equation}

\item[{\rm (b)}] If $m \ge n$ and
condition {\rm (\ref{major})} holds, then there exist $\sigma \in D(n)$ with
eigenvalues $\lambda_1, \dots, \lambda_n$ and $\rho \in \cS(\sigma)$ with
eigenvalues $\mu_1, \dots, \mu_{mn}$.
\end{itemize}
\end{theorem}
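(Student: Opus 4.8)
The plan is to derive the three majorizations in (a) from the adjoint of the partial trace together with Ky Fan's maximum principle, and to prove (b) by descending from an explicitly realizable state. For (a), the basic identity is $\tr(\rho(I_m\otimes A))=\tr(\sigma A)$ for all $A\in H_n$, which expresses that $I_m\otimes A=\tr_1^*(A)$. For (\ref{major-1}), given $k=tm+p$ with $0\le p<m$, I would choose $A=A^*$ commuting with $\sigma$, with $0\preceq A\preceq I_n$ and eigenvalues $1$ ($t$ times), $p/m$, $0,\dots,0$ aligned with the $t$ largest eigenvalues of $\sigma$; then $I_m\otimes A$ satisfies $0\preceq I_m\otimes A\preceq I_{mn}$ and $\tr(I_m\otimes A)=k$, so Ky Fan gives $\sum_{i=1}^k\mu_i\ge\tr(\rho(I_m\otimes A))=\tr(\sigma A)=\sum_{i=1}^t\lambda_i+\frac{p}{m}\lambda_{t+1}$, which is exactly the sum of the $k$ largest entries of the left-hand vector of (\ref{major-1}). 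Equal total sums then give (\ref{major-1}). Inequality (\ref{major}) follows by a block-summation lemma: summing consecutive blocks of $m$ entries of two sorted vectors $x\prec y$ preserves both sortedness and $\prec$, since the partial sums of the block sums are the partial sums of $x,y$ at multiples of $m$; applying this to the two vectors of (\ref{major-1}) yields (\ref{major}).

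Inequality (\ref{major2}) runs in the opposite direction and I would obtain it from a Schmidt-rank bound. Writing $\rho=\sum_i\mu_iz_iz_i^*$ and letting $\Pi$ be the projection onto $\span\{z_1,\dots,z_s\}$, each $[z_i]$ has rank at most $m$, so $z_1,\dots,z_s$ all lie in $\IC^m\otimes{\mathcal W}$ for some ${\mathcal W}\subseteq\IC^n$ with $\dim{\mathcal W}\le sm$. Hence $\Pi\preceq I_m\otimes Q$ for $Q$ the projection onto ${\mathcal W}$, and
$$\sum_{i=1}^s\mu_i=\tr(\rho\,\Pi)\le\tr(\rho\,(I_m\otimes Q))=\tr(\sigma Q)\le\sum_{j=1}^{sm}\lambda_j,$$
the middle equality being the adjoint identity and the last inequality Ky Fan for $\sigma$ (using $\rank Q\le sm$ and $\lambda_j\ge0$). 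As $s$ varies these are the partial-sum inequalities defining (\ref{major2}), with $\lambda_j=0$ for $j>n$.

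For (b), Lemma \ref{lem:Connect-Extension-UST} lets me take $\sigma=\diag(\lambda_1,\dots,\lambda_n)$. I would begin with the block-diagonal state $\rho_0=\sum_{r=1}^n\tau_r\otimes E_{rr}^{(n)}$, where $\tau_r=\diag(\mu_{(r-1)m+1},\dots,\mu_{rm})$; it has spectrum $(\mu_1,\dots,\mu_{mn})$ and $\tr_1(\rho_0)=\diag(\nu_1,\dots,\nu_n)$ with $\nu_r=\sum_{j=1}^m\mu_{(r-1)m+j}$, the right-hand vector of (\ref{major}). Since $(\lambda_1,\dots,\lambda_n)\prec(\nu_1,\dots,\nu_n)$, I would pass from $\nu$ to $\lambda$ by a finite chain of $T$-transforms, implementing each by a Givens rotation of $\rho$ in a plane $\span\{e_b\otimes e_p,\,e_{b'}\otimes e_q\}$ with $b\ne b'$ (possible as $m\ge 2$). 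Such a conjugation leaves the spectrum equal to $\mu$; because $b\ne b'$ it moves weight between the $(p,p)$ and $(q,q)$ diagonal entries of $\tr_1(\rho)$ without producing a $(p,q)$ entry, so the reduced state stays diagonal and only its $p$-th and $q$-th eigenvalues change, drawing them together. Iterating drives $\tr_1(\rho)$ from $\diag(\nu)$ to $\diag(\lambda)$.

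I expect this construction to be the main difficulty. A single rotation on the block-diagonal $\rho_0$ redistributes two reduced eigenvalues cleanly, but after $\rho$ has acquired off-diagonal entries a later rotation touching sector $q$ can generate intra-sector couplings that feed back into the $(p,q)$ entry of $\tr_1(\rho)$; keeping the reduced state exactly diagonal throughout the descent is the delicate point. This is where $m\ge n$ should enter --- to supply enough indices $b\ne b'$ to route the rotations without accumulating such couplings --- and it matches the fact that for $m<n$ majorization is necessary but not sufficient (so that case is left open). Equivalently, the task is to show that the image of the isospectral orbit $\{U\diag(\mu)U^*\}$ under $\rho\mapsto\lambda(\tr_1\rho)$ is the entire majorization polytope $\{x:x\prec\nu\}$, and carrying this out rigorously is, I expect, the crux.
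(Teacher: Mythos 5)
Your part (a) is correct and takes a somewhat different route from the paper's. For (\ref{major-1}) the paper permutes $\rho$ into $n\times n$ blocks of size $m$, makes each diagonal block have constant diagonal by Horn's theorem, and then invokes Schur's majorization of the diagonal by the spectrum; you instead test $\rho$ against the operators $I_m\otimes A$ with $0\preceq A\preceq I_n$, $\tr(I_m\otimes A)=k$, and use Ky Fan's maximum principle in its standard extension from rank-$k$ projections to operators $0\preceq X\preceq I$ with $\tr X=k$ (needed because your $A$ has the fractional eigenvalue $p/m$). The two arguments are essentially dual and equally rigorous; yours avoids the permutation and block bookkeeping. Your block-summation step for (\ref{major}) is sound, and your projection argument for (\ref{major2}), via $\Pi\preceq I_m\otimes Q$ with $\rank Q\le sm$, is the paper's rank argument in different notation.

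Part (b) contains a genuine gap, and you point at it yourself: everything after the construction of $\rho_0$ is a plan, not a proof. Keeping $\tr_1(\rho)$ exactly diagonal through a chain of Givens rotations is not routine: once $\rho$ has acquired off-diagonal entries, a rotation in a plane $\span\{e_b\otimes e_p,\ e_{b'}\otimes e_q\}$ spreads every existing entry lying in the touched rows and columns, and preventing the spread terms from landing in positions $\big((i,p'),(i,q')\big)$ with equal first index would require a combinatorial invariant that you never formulate, let alone verify; your closing sentence concedes that this ``orbit image'' statement is the crux. The paper avoids any iterative descent, and the mechanism by which $m\ge n$ enters is entirely different: Horn's theorem is applied twice, and then all off-diagonal contributions to the partial trace are cancelled at once by a Fourier phase conjugation. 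Concretely, choose $U_k\in M_m$ so that $A_k=U_k^*\diag(\mu_{(k-1)m+1},\dots,\mu_{km})U_k$ has constant diagonal $w_k/m$, where $w_k=\sum_{j=1}^m\mu_{(k-1)m+j}$, and set $A=\sum_{k=1}^n A_k\otimes E_{kk}^{(n)}$; then every diagonal block of $A$, in its $m\times m$ block structure with blocks in $M_n$, equals $\frac{1}{m}\diag(w_1,\dots,w_n)$. By hypothesis (\ref{major}) and Horn's theorem again, choose $U\in M_n$ so that $B=\frac{1}{m}U^*\diag(w_1,\dots,w_n)U$ has diagonal entries $\lambda_1/m,\dots,\lambda_n/m$, and let $D=\oplus_{k=0}^{m-1}\diag(\omega^k,\omega^{2k},\dots,\omega^{nk})$ with $\omega=e^{2\pi i/m}$. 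Then $\rho=D^*(I_m\otimes U)^*A(I_m\otimes U)D$ has spectrum $(\mu_1,\dots,\mu_{mn})$ and
$$\tr_1(\rho)=\sum_{k=0}^{m-1}D_k^*BD_k, \qquad \big(\tr_1(\rho)\big)_{pq}=B_{pq}\sum_{k=0}^{m-1}\omega^{(q-p)k},$$
and since $m\ge n$ guarantees $0<|p-q|<m$ whenever $p\ne q$, these geometric sums vanish off the diagonal, giving $\tr_1(\rho)=\diag(\lambda_1,\dots,\lambda_n)$. This one-shot cancellation is exactly what your rotation scheme lacks; to complete your proof you would have to either establish your routing claim rigorously or replace the descent by an argument of this kind.
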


\it Proof. \rm
(a)  Suppose $\rho = (\rho_{ij})_{1 \le i, j \le m}\in S(\sigma) $ has eigenvalues
$\mu_1 \ge \cdots \ge \mu_{mn}$.  We may assume that
$\rho_{11 } +\rho_{22 }+\cdots+\rho_{ mm}  = \diag(\lambda_1, \dots, \lambda_n)$.
Then there is a permutation matrix $P \in M_{mn}$ such that
$P\rho P^t = (\tilde \rho_{ij})_{1 \le i, j \le n}$ such that
$\tilde \rho_{ij} \in M_m$ such that $\tr \tilde \rho_{jj} = \lambda_j$ for
$j = 1, \dots, n$. There are unitary matrices $U_1, \dots, U_n \in M_m$
such that all the diagonal entries of
$U_j \tilde \rho_{jj} U_j^*$ equals $\tr(\rho_{jj})/m = \lambda_j/m$.
Let $U = U_1 \oplus \cdots \oplus U_n$. Then
the vector of diagonal entries of the matrix $UP\rho P^t U^*$ is majorized by
the vector of eigenvalues; see \cite{H} and \cite[Chapter 5]{MO} for example.
We get (\ref{major-1}), and (\ref{major}).

To prove (\ref{major2}), suppose $\rho$ has spectral decomposition
$\rho=\mu_1z_1z_1^*+\cdots+\mu_{mn}z_{mn}z_{mn}^*$.
Then
$$\sum_{j=1}^k \mu_j = \tr(\sum_{j=1}^k \mu_j z_j z_j^*)
= \tr(\tr_1(\sum_{j=1}^k \mu_j z_j z_j^*)).$$
Because
$\tr_1(\sum_{j=1}^k \mu_j z_j z_j^*)$ has rank at most $mk$ and
$\tr_1(\rho) - \tr_1(\sum_{j=1}^k \mu_j z_j z_j^*)$ is positive semi-definite,
$\tr(\tr_1(\sum_{j=1}^k \mu_j z_j z_j^*))$ is bounded by the sum of
the $mk$ largest eigenvalues of $\tr_1(\rho)$, i.e.,
$\sum_{j=1}^{km} \lambda_j$.

(b) Suppose $m \ge n$ and the majorization holds. Let
$w_k = \sum_{j=1}^m \mu_{(k-1)m + j}$ for $k = 1, \dots, n$.
By the result of Horn \cite{H}, there exist  unitary matrices
$U_1,\dots,U_n \in M_m$ such that
$$A_k=U_k^*\diag(\mu_{(k-1)m + 1}, \mu_{(k-1)m + 2}\dots, \mu_{km})U_k$$
has constant diagonal $\dfrac{1}m(w_k,\dots,w_k)$.
 Then the matrix
$A=\sum_{k=1}^nA_k\otimes E^{(n)}_{kk}$ has eigenvalues $\mu_1,\dots,\mu_{mn}$
and has the form $A=(A_{ij})_{i,j=1}^m$, where
$A_{ii}=\dfrac{1}m\diag(w_1, \dots, w_n)\in M_n$. Let $U$ be a unitary such that
$U^*\diag(w_1, \dots, w_n)U$ has diagonal entries
$\lambda_1, \dots, \lambda_n$.
Let $\omega=e^{\frac{2\pi i}m}$ and $D = \oplus_{k=0}^{m-1} \diag(\omega^k,\omega^{2k},\dots, \omega^{nk})$. Then
$$\rho=D^*(I_m\otimes U)^*A(I_m\otimes U)D$$
will  have reduced state $\tr_1(\rho)
= \diag(\lambda_1, \dots, \lambda_n)$.
\qed

The following corollaries are clear.

\begin{corollary} Suppose $m = n = 2$, and $\sigma \in D(2)$ has eigenvalues
$\lambda_1 \ge \lambda_2 \ge 0$.  There exists $\rho \in D(2\cdot 2)$
with eigenvalues $\mu_1 \ge \cdots \ge \mu_4$ satisfying $\tr_1(\rho) = \sigma$
if and only if $\mu_1 + \mu_2 \ge \lambda_1$
\end{corollary}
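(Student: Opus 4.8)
The plan is to read off this corollary as the specialization of Theorem \ref{mn1} to $m=n=2$, in which case the hypothesis $m \ge n$ of part (b) is met (with equality), so that both the necessity and the sufficiency directions are already available. The only genuine work is to check that the majorization condition (\ref{major}) collapses to the single scalar inequality $\mu_1 + \mu_2 \ge \lambda_1$ in this dimension.

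First I would write out (\ref{major}) for $m = n = 2$. Its right-hand side becomes the two-entry vector $(\mu_1 + \mu_2, \mu_3 + \mu_4)$, while the left-hand side is $(\lambda_1, \lambda_2)$. Since $\mu_1 \ge \mu_2 \ge \mu_3 \ge \mu_4$, we have $\mu_1 + \mu_2 \ge \mu_3 + \mu_4$, so the right-hand vector is already in descending order, and both vectors have total $1$. For two-entry vectors with equal sums arranged in descending order, the majorization $(\lambda_1, \lambda_2) \prec (\mu_1 + \mu_2, \mu_3 + \mu_4)$ reduces to the single inequality on the leading entries, $\lambda_1 \le \mu_1 + \mu_2$. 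Thus (\ref{major}) is equivalent to $\mu_1 + \mu_2 \ge \lambda_1$.

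For necessity I would invoke Theorem \ref{mn1}(a): if some $\rho \in \cS(\sigma) \subseteq D(2\cdot 2)$ with eigenvalues $\mu_1, \dots, \mu_4$ satisfies $\tr_1(\rho) = \sigma$, then (\ref{major}) holds, hence $\mu_1 + \mu_2 \ge \lambda_1$. For sufficiency, assuming $\mu_1 + \mu_2 \ge \lambda_1$, I would apply Theorem \ref{mn1}(b) to produce some $\sigma' \in D(2)$ with eigenvalues $\lambda_1, \lambda_2$ and some $\rho' \in \cS(\sigma')$ with eigenvalues $\mu_1, \dots, \mu_4$.

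The final step passes from this auxiliary $\sigma'$ to the prescribed matrix $\sigma$. Since $\sigma$ and $\sigma'$ share the same eigenvalues, there is a unitary $V \in M_2$ with $\sigma = V \sigma' V^*$, and Lemma \ref{lem:Connect-Extension-UST} gives $\cS(\sigma) = (I_2 \otimes V)\cS(\sigma')(I_2 \otimes V)^*$. Hence $\rho = (I_2\otimes V)\rho'(I_2\otimes V)^*$ lies in $\cS(\sigma)$ and, being unitarily similar to $\rho'$, retains the eigenvalues $\mu_1, \dots, \mu_4$, which completes the construction. There is no serious obstacle; the only points demanding care are the reduction of the two-dimensional majorization to a single inequality and the bookkeeping that part (b) only guarantees a state over some $\sigma'$ with the correct spectrum, so that the unitary-conjugation lemma is needed to realize the exact prescribed $\sigma$.
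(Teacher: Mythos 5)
Your proposal is correct and matches the paper's approach: the paper treats this corollary as an immediate consequence of Theorem \ref{mn1} (stating it without proof as ``clear''), and your argument is exactly that derivation with the details filled in --- reducing the majorization (\ref{major}) for two-entry vectors of equal sum to the single inequality $\lambda_1 \le \mu_1+\mu_2$, using part (a) for necessity and part (b) for sufficiency. Your extra care in invoking Lemma \ref{lem:Connect-Extension-UST} to pass from the $\sigma'$ produced by part (b) to the prescribed $\sigma$ is a legitimate and necessary bookkeeping step that the paper leaves implicit.
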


\begin{corollary} Suppose $m \ge n$.
\begin{itemize}
\item[{\rm (a)}]
For any $\sigma \in D(n)$ there is a pure state $\rho \in D(m\cdot n)$
such that $\tr_1(\rho) = \sigma$.
\item[{\rm (b)}] If $\sigma \in D(n)$ is a pure state and
$\rho \in \cS(\sigma)$, then $\rho$ has rank at most $m$.
\end{itemize}
\end{corollary}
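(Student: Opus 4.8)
The plan is to obtain both assertions as immediate specializations of results already established, since each is a clean corollary rather than a genuinely new statement.

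For part (a), I would invoke Corollary \ref{cor:2.4}. A purification is exactly a rank-one element $\rho \in D(m \cdot n)$ with $\tr_1(\rho) = \sigma$, and Corollary \ref{cor:2.4} with $k = 1$ guarantees that such a $\rho$ exists precisely when $m \ge r$, where $r = \rank(\sigma)$. Since $\sigma \in D(n)$ forces $r \le n$, the hypothesis $m \ge n$ yields $m \ge n \ge r$, so the existence criterion is met and a pure $\rho$ is produced. Alternatively, one could extract part (a) from Theorem \ref{mn1}(b): choosing the target eigenvalue profile $\mu_1 = 1$ and $\mu_2 = \cdots = \mu_{mn} = 0$ for the desired pure state, the right-hand side of the majorization (\ref{major}) collapses to $(1, 0, \dots, 0)$, and $(\lambda_1, \dots, \lambda_n) \prec (1, 0, \dots, 0)$ holds automatically because $\sum_{j=1}^n \lambda_j = 1$ and $\lambda_1 \le 1$; the conclusion of Theorem \ref{mn1}(b) then supplies the required pure $\rho$.

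For part (b), I would appeal directly to Proposition \ref{pure}. Because $\sigma$ is a pure state, that proposition gives the complete description $\cS(\sigma) = \{\xi \otimes \sigma : \xi \in D(m)\}$, so every $\rho \in \cS(\sigma)$ factors as $\rho = \xi \otimes \sigma$ with $\xi \in D(m)$. Since $\rank(\sigma) = 1$, the multiplicativity of rank under tensor products gives $\rank(\rho) = \rank(\xi)\cdot\rank(\sigma) = \rank(\xi) \le m$, which is the asserted bound. There is essentially no obstacle in the argument: both parts reduce to earlier theorems, and the only point requiring any care is the numerical chain $r \le n \le m$ in part (a), which is exactly where the standing hypothesis $m \ge n$ is used. (Indeed, part (b) needs no relation between $m$ and $n$ at all.)
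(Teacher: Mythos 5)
Your proposal is correct, but it takes a different route from the paper. The paper offers no explicit argument---it places this corollary immediately after Theorem \ref{mn1} with the remark that it is ``clear,'' so the intended derivation is from the eigenvalue majorizations: part (a) from Theorem \ref{mn1}(b) applied to the profile $(\mu_1,\dots,\mu_{mn})=(1,0,\dots,0)$ (essentially your alternative route), and part (b) from (\ref{major-1}), whose left-hand side for a pure $\sigma$ is $(1/m,\dots,1/m,0,\dots,0)$, so that the majorization forces $\sum_{j=1}^m \mu_j \ge 1$ and hence $\mu_{m+1}=\cdots=\mu_{mn}=0$, i.e.\ $\rank(\rho)\le m$. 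You instead pull both parts from the structural results of Sections 2--3: Corollary \ref{cor:2.4} for (a) and Proposition \ref{pure} for (b). This is equally valid and arguably sharper: your route for (a) shows the hypothesis $m\ge n$ can be weakened to $m\ge \rank(\sigma)$, and your route for (b) shows that no relation between $m$ and $n$ is needed and moreover yields the full tensor structure $\rho=\xi\otimes\sigma$, not merely the rank bound. One small caveat on your alternative argument for (a): Theorem \ref{mn1}(b) only asserts the existence of \emph{some} $\sigma$ with the prescribed spectrum, so to conclude the statement for an \emph{arbitrary} given $\sigma$ you must additionally conjugate by $I_m\otimes U$ via Lemma \ref{lem:Connect-Extension-UST}; your primary route avoids this entirely, since Corollary \ref{cor:2.4} is stated for a given $\sigma$.
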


It is interesting to note that if $m \ge n$,
the simple majorization condition (\ref{major2})
governs the relations between the eigenvalues of $\rho$ and $\sigma$ with
$\rho \in \cS(\sigma)$.
For $m < n$, the majorization condition is not good enough as shown in the following.

\begin{example}
\rm Suppose $m =2$ and $ n = 3$.  Let $\sigma = I_3/3$, and $\rho = uu^*$ for a unit vector.
Then $\tr_1(\rho)$ has rank at most two and cannot be $\sigma$.
Note that the rank is not the only obstacle. Suppose
$\rho = U^*\diag(1-5d, d,d,d,d,d)U = (\rho_{ij})_{1 \le i,j \le 2}$
for $d = 0.1$, and $\tr_1(\rho) = \sigma$.
Since $\rho_{11} + \rho_{22} = I_3/3$, they commute and we may assume that
they are in diagonal form: $\rho_{11} = \diag(d_1,d_2,d_3)$,
and $\rho_{22} = I_3/3 - \rho_{11}$.
If $\rho_{11}$ has eigenvalues $d_1 \ge d_2 \ge d_3$, then
by the generalized interlacing inequality \cite{FP},
$d \ge d_2 \ge d\Ra d_2=d$. Similarly, the second largest eigenvalue of $\rho_{22}$
also equals $d$. But then $d+d = 2d \ne 1/3$.
\end{example}

\begin{theorem} \label{4.5} There exist density matrices
$\sigma\in M_3$ with eigenvalues $\lambda_1\ge \lambda_2\ge \lambda_3$ and $\rho\in M_6 $
with eigenvalues $\mu_1\ge \cdots\ge \mu_6$ such that $\tr_1(\rho) =\sigma$
if and only if $\mu_4+\mu_5\le\lambda_1\le  \mu_1+\mu_2$ and
$ \mu_5+ \mu_6\le \lambda_3\le  \mu_2+  \mu_3$.
\end{theorem}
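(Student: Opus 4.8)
The plan is to split the four inequalities into the two "outer" bounds $\lambda_1\le\mu_1+\mu_2$, $\lambda_3\ge\mu_5+\mu_6$, which follow from the majorization (\ref{major}) of Theorem \ref{mn1}(a), and the two "inner" bounds $\lambda_1\ge\mu_4+\mu_5$, $\lambda_3\le\mu_2+\mu_3$, which cannot come from any relation in Theorem \ref{mn1}(a) (a majorization $\lambda\prec(\cdots)$ can never bound the \emph{largest} eigenvalue of $\sigma$ from below). By Lemma \ref{lem:Connect-Extension-UST} I may take $\sigma$ diagonal and write $\rho=(\rho_{ij})_{1\le i,j\le 2}$ with $\rho_{ij}\in M_3$, so that $\sigma=\rho_{11}+\rho_{22}$. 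For a unit $w\in\IC^3$ set $R_w=I_2\otimes ww^*$, the rank-two orthogonal projection onto $\span\{e_1\otimes w,e_2\otimes w\}$; then $w^*\sigma w=w^*\rho_{11}w+w^*\rho_{22}w=\tr(\rho R_w)$. Since $R_w$ has rank two, Ky Fan's inequalities give $\mu_5+\mu_6\le\tr(\rho R_w)\le\mu_1+\mu_2$ for every $w$, and maximizing/minimizing over $w$ recovers $\lambda_1\le\mu_1+\mu_2$ and $\lambda_3\ge\mu_5+\mu_6$.

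For the two inner bounds I would exploit that $n=3>m=2$. Let $z_1$ be a unit eigenvector of $\rho$ for $\mu_1$, written $z_1=e_1\otimes a+e_2\otimes b$ with $a,b\in\IC^3$. As $\span\{a,b\}$ is at most two-dimensional, there is a unit $w\in\IC^3$ with $w\perp a$ and $w\perp b$, and for this $w$ one checks $R_w z_1=0$, i.e. $z_1\perp\Range(R_w)$. Writing $\tr(\rho R_w)=\sum_j\mu_j c_j$ with $c_j=\langle z_j,R_w z_j\rangle\in[0,1]$, $\sum_j c_j=\tr R_w=2$ and $c_1=0$, the remaining sum $\sum_{j\ge2}\mu_j c_j$ is at most $\mu_2+\mu_3$; hence $\lambda_3\le w^*\sigma w=\tr(\rho R_w)\le\mu_2+\mu_3$. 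Running the same argument on an eigenvector $z_6$ for the smallest eigenvalue $\mu_6$ forces $c_6=0$ and bounds $\sum_{j\le 5}\mu_j c_j$ below by $\mu_4+\mu_5$, giving $\lambda_1\ge\mu_4+\mu_5$. As a consistency check, the involution $\rho\mapsto(I_6-\rho)/5$ sends $\cS(\sigma)$-data to $\cS((2I_3-\sigma)/5)$-data and interchanges the two inner inequalities (and the two outer ones), so either inner bound also follows from the other by symmetry. This settles necessity.

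For sufficiency I would argue as follows, and this is where the real work lies. The feasible set of $(\lambda_1,\lambda_2,\lambda_3)$ for fixed $\mu$ is the polygon cut out by the four inequalities together with $\lambda_1\ge\lambda_2\ge\lambda_3$; I would realize it using the fact that the set of attainable spectra $\lambda(\tr_1(\rho))$, as $\rho$ ranges over the unitary orbit of $\diag(\mu)$, is convex (a quantum-marginal/moment polytope), so that it suffices to exhibit, at each vertex of the polygon, a $\rho$ with spectrum $\mu$ whose partial trace has the prescribed spectrum; the general point then follows by convexity and Lemma \ref{lem:Connect-Extension-UST}. Each vertex is obtained by making two constraints tight, and at each I would build $\rho$ explicitly by coupling the three coordinates of $\IC^3$ through $2\times2$ and $4\times4$ positive semidefinite blocks with off-diagonal entries tuned so that the global spectrum is exactly $\mu$. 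The main obstacle is precisely this construction: the $\lambda$-sweep fills a two-dimensional region, so the obvious decoupled recipe that merely pairs the six numbers $\mu_j$ into coordinate blocks (realizing only the finitely many partitions of $\mu$ into triples of pair-sums) is far too weak, and genuine off-diagonal coupling is required while keeping the eigenvalues pinned at $\mu$; controlling the exact spectrum across all vertices is the delicate part. An alternative I would keep in reserve is a direct continuous construction, or an appeal to Fan--Pall interlacing/completion theory to select admissible diagonal blocks $A=\rho_{11}$, $B=\rho_{22}$ with $A+B=\diag(\lambda_1,\lambda_2,\lambda_3)$ admitting a Hermitian completion of spectrum $\mu$.
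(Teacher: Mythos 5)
Your necessity argument is correct, complete, and genuinely different from the paper's. For the outer bounds the paper uses the majorization $(\lambda_1,\lambda_2,\lambda_3)\prec(\mu_1+\mu_2,\mu_3+\mu_4,\mu_5+\mu_6)$ inherited from Theorem \ref{mn1}; your Ky Fan argument with the rank-two projections $R_w=I_2\otimes ww^*$ recovers the same bounds. For the inner bounds the paper works with the spectra $a_i,b_i$ of the diagonal blocks $\rho_{11},\rho_{22}$, invoking Horn's inequalities for eigenvalues of sums of Hermitian matrices (the triples $((1,3),(1,3),(2,3))$ and $((2,4),(2,4),(4,5))$) together with the principal-submatrix result of \cite{LP03}. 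Your replacement --- pick a unit $w\in\IC^3$ orthogonal to both components of the top (resp.\ bottom) eigenvector of $\rho$, which exists precisely because $n=3>2=m$, so that the weight $c_1$ (resp.\ $c_6$) vanishes in $\tr(\rho R_w)=\sum_j\mu_jc_j$ --- is more elementary and self-contained, and I verified it is sound. This half could stand as written.

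The sufficiency half, however, has a genuine gap: no state is ever constructed. Your plan rests on (i) convexity of the set of attainable spectra $\lambda(\tr_1(U\diag(\mu)U^*))$, a nontrivial moment-polytope theorem belonging to exactly the Klyachko/Daftuar--Hayden machinery \cite{Kl,Hy} that this paper is trying to bypass, and (ii) realizing $\rho$ at every vertex of the polygon, which you yourself flag as the main obstacle and leave undone (you also never enumerate the vertices, whose combinatorics depends on how the pair sums $\mu_i+\mu_j$ interlace, so that step is itself a case analysis). The idea you are missing is elementary: if $\mu_j\le\hat\mu_j$ and $\hat\mu_k\le\mu_k$ satisfy $\hat\mu_j+\hat\mu_k=\mu_j+\mu_k$, then the real symmetric $2\times2$ matrix with diagonal $(\hat\mu_j,\hat\mu_k)$ and off-diagonal entry $a=\sqrt{\hat\mu_j\hat\mu_k-\mu_j\mu_k}$ has eigenvalues exactly $\mu_j,\mu_k$, while its diagonal is continuously tunable. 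The paper embeds two such blocks across the two tensor factors (coupling the $(2,4)$ and $(3,5)$ entries of the $6\times6$ matrix), plus two uncoupled diagonal entries, so that $\rho$ has spectrum exactly $\mu$ and $\tr_1(\rho)$ is diagonal with entries given by sums of one entry of $\rho_{11}$ and one of $\rho_{22}$; the hypotheses $\mu_4+\mu_5\le\lambda_1\le\mu_1+\mu_2$ and $\mu_5+\mu_6\le\lambda_3\le\mu_2+\mu_3$ are then used to place $\lambda_1$ in one of the five consecutive intervals $[\mu_5+\mu_4,\mu_5+\mu_3],\dots,[\mu_3+\mu_2,\mu_1+\mu_2]$ and to check, case by case, that one of the remaining $\lambda_\ell$ fits a window formed by the three leftover $\mu$'s. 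This continuous tuning fills the whole two-parameter region of admissible $(\lambda_1,\lambda_2,\lambda_3)$ directly, with no appeal to convexity. Without such a construction --- or the Fan--Pall completion route you mention only as a reserve option and do not execute --- the ``if'' direction of the theorem remains unproved.
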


\it Proof. \rm
Suppose $\rho=\(\begin{array}{cc}\rho_{11}&\rho_{12}\\
\rho_{21}&\rho_{22}\end{array}\)$ has  eigenvalues $\mu_1\ge \cdots\ge \mu_6$ such that
$\tr_1(\rho) =\sigma$. Then we may assume that $\rho_{11}+\rho_{22}
=\diag(\lambda_1, \lambda_2, \lambda_3)$. As in the proof of Theorem \ref{mn1}, we have
$(\lambda_1, \lambda_2, \lambda_3)\prec (\mu_1+\mu_2,\mu_3+\mu_4, \mu_5+ \mu_6)$.
Therefore,
we have $\lambda_1\le  \mu_1+\mu_2$ and $ \mu_5+ \mu_6\le \lambda_3$. Suppose $\rho_{11}$
and $\rho_{22}$ have eigenvalues $a_1\ge a_2\ge a_3$ and $b_1\ge b_2\ge b_3$ respectively.
Then applying  Horn inequalities \cite{Fu,Fu1} for the triple $((1,3),(1,3),(2,3))$, we have
$a_1+a_3+b_1+b_3\ge \lambda_2+\lambda_3\Ra \lambda_1\ge a_2 +b_2$. Let
$a_i=b_i=0$ for $i=4,5,6$.
By a result in \cite{LP03},  there exist $A,\ B\in H_6 $ with eigenvalues
$a_1\ge \cdots\ge a_6$ and $b_1\ge \cdots\ge b_6$ respectively, such that $A+B$ has eigenvalues
$\mu_1, \dots, \mu_6$.  Applying  Horn inequalities for the triple $((2,4),(2,4),(4,5))$,
we have $$\lambda_1\ge a_2 +b_2=a_2+a_4 +b_2+b_4\ge \mu_4+\mu_5$$
The inequality $\lambda_3\le  \mu_2+  \mu_3$ follows from symmetry.

Conversely, suppose $\mu_4+\mu_5\le\lambda_1\le  \mu_1+\mu_2$ and
$ \mu_5+ \mu_6\le \lambda_3\le  \mu_2+  \mu_3$. Then $\lambda_1$
lies in (at least) one of the following intervals:
\begin{equation}\label{mu}
  [\mu_5 + \mu_4 ,\mu_5 + \mu_3 ],
\ [\mu_5 + \mu_3 ,\mu_5 + \mu_2 ],
  [\mu_5 + \mu_2 ,\mu_4 + \mu_2 ],\ [\mu_4 + \mu_2 ,\mu_3 + \mu_2 ],
\ [\mu_3 + \mu_2 ,\mu_1 + \mu_2 ]
\end{equation}
Suppose $\mu_i+\mu_j\le \lambda_1\le \mu_i+\mu_k$. Then we can choose
$\mu_j\le \hat \mu_j,\ \hat\mu_k\le \mu_k$ such that $\mu_i+\hat\mu_j= \lambda_1$ and
$\mu_j+\mu_k=\hat\mu_j+\hat\mu_k$. Let $a=\sqrt{\hat\mu_j\hat\mu_k-\mu_j\mu_k}$.  Then
$\[\begin{array}{cc}\hat\mu_j&a\\ a&\hat \mu_k\end{array}\]$ has eigenvalues $\mu_j,\ \mu_k$.

Let the remaining 3 eigenvalues of $\rho$ be
$\{\mu_{i},\  \mu_{j},\ \mu_k\}^c=\{\mu_{i_1},\ \mu_{i_2},\ \mu_{i_3}\}$.

\medskip\noindent
{\bf Claim}  For some $\ell=2$ or $3$,  $\lambda_\ell$  satisfies i)
$\mu_{i_1}+\mu_{i_2}\le \lambda_{\ell}\le \mu_{i_1}+\mu_{i_3}$ or ii)
$\hat\mu_k+\mu_{i_2}\le \lambda_{\ell}\le \hat\mu_k+\mu_{i_3}$

Suppose i) in  the claim holds. Then we can choose
$\mu_{i_2}\le \hat \mu_{i_2},\ \hat\mu_{i_3}\le \mu_{i_3}$
such that $\mu_{i_1}+\hat\mu_{i_2}= \lambda_\ell$ and
$\mu_{i_2}+\mu_{i_3}=\hat\mu_{i_2}+\hat\mu_{i_3}$.
Let $b=\sqrt{\hat\mu_{i_2}\hat\mu_{i_3}-\mu_{i_2}\mu_{i_3}}$.
Then $\[\begin{array}{cc}\hat \mu_{i_2}&b\\ b&\hat\mu_{i_3}\end{array}\]$ has eigenvalues
$\mu_{i_2},\ \mu_{i_3}$. Hence, the matrix $$\rho=\[\begin{array}{ccc|ccc}\mu_i&0&0&0&0&0\\
0&\hat\mu_k&0&a&0&0\\
0&0&\hat\mu_{i_2}&0&b&0\\
\hline
0&a&0&\hat\mu_j&0&0\\
0&0&b&0&\hat\mu_{i_3}&0\\
0&0&0&0&0&\mu_{i_1}\end{array}\]$$
has eigenvalues $\mu_1,\dots,\mu_6$ and $\tr_1(\rho)$ has eigenvalues $\lambda_1,\ \lambda_2$
and $\lambda_3$.

The proof for the case ii) is similar.

We are going to show that the claim holds in each of the cases in (\ref{mu}).

\begin{enumerate}
\item $\mu_5 + \mu_4\le \lambda_1\le\mu_5 + \mu_3:$ Then the remaining 3 eigenvalues are
$\mu_1,\ \mu_2$ and $\mu_6$. We have
$\lambda_2\le \lambda_1\le\mu_5 + \mu_3\le \mu_2 + \mu_1$ and

$$\lambda_2\ge\sum_{i=1}^6\mu_i-2\lambda_1 \ge\sum_{i=1}^6\mu_i-2(\mu_5 + \mu_3)\ge \mu_2
+ \mu_6$$
\item $\mu_5 + \mu_3\le \lambda_1\le\mu_5 + \mu_2:$ Then the remaining 3 eigenvalues are
$\mu_1,\ \mu_4$ and $\mu_6$. We have
$\lambda_2\le \lambda_1\le\mu_2 + \mu_5\le \mu_1 + \mu_4$ and
$$\lambda_2\ge\(\sum_{i=1}^6\mu_i-\lambda_1\)/2 \ge\(\sum_{i=1}^6\mu_i-(\mu_5 + \mu_2)\)/2
\ge \mu_4 + \mu_6$$

\item $\mu_5 + \mu_2\le \lambda_1\le\mu_4 + \mu_2:$ Then the remaining 3 eigenvalues are
$\mu_1,\ \mu_3$ and $\mu_6$. We have
$\lambda_2\le \lambda_1\le\mu_2 + \mu_4\le \mu_1 + \mu_3$ and
$$\lambda_2\ge\(\sum_{i=1}^6\mu_i-\lambda_1\)/2 \ge\(\sum_{i=1}^6\mu_i-(\mu_4 + \mu_2)\)/2
\ge \mu_3 + \mu_6$$

\item $\mu_4 + \mu_2\le \lambda_1\le\mu_3 + \mu_2:$ Then the remaining 3 eigenvalues are
$\mu_1,\ \mu_5$ and $\mu_6$. We have $\lambda_3\ge   \mu_5 + \mu_6$ and
$$\lambda_3\le\(\sum_{i=1}^6\mu_i-\lambda_1\)/2 \le\(\sum_{i=1}^6\mu_i-(\mu_4 + \mu_2)\)/2 \le \mu_1 + \mu_5$$

\item $\mu_3 + \mu_2\le \lambda_1\le\mu_1 + \mu_2:$
Since $\mu_5 + \mu_6\le \lambda_3\le \mu_2+\mu_3$, consider the following cases:

\begin{enumerate}
\item
If  $\mu_5 + \mu_6\le \lambda_3\le   \mu_5 + \mu_4$, then we are done.

\item
If  $\mu_5 + \mu_4\le \lambda_3\le   \mu_3 + \mu_4$, then we use $\mu_6 + \mu_2
\le \lambda_1\le\mu_1 + \mu_2$ and we are done.

\item If  $\mu_3 + \mu_4\le \lambda_3\le   \mu_3 + \mu_2$, then we have
$$\lambda_1+\lambda_2\ge \mu_3 + \mu_2+\mu_3 + \mu_4\ge \mu_2+\mu_3 + \mu_4+\mu_6\Ra \lambda_2
\le \mu_1+\mu_5$$
So we can use  $\mu_6 + \mu_5\le \lambda_2\le\mu_1 + \mu_5$ and we are done. \qed
\end{enumerate}

\end{enumerate}

\section{Final remarks and further research}

First, let us give the solutions of the simple questions mentioned in Section 1
using the results in Section 3 and 4. (Theorems \ref{4.5}, \ref{pro:rank1}, and \ref{extreme}).

\begin{enumerate}
\item There exists a density matrix $\rho \in M_{2\cdot 3}$
with eigenvalues $a_1 \ge \cdots \ge a_6$
such that $\tr_1(\rho) = I_3/3$ if and only if
$$a_2+a_3 \ge 1/3 \ge a_4+a_5.$$
\item
There exists a density matrix in $\cS(I_3/3)$
with rank $k$ if and only if $2 \le k \le 6$.

\item
 There exists an extreme point of
 $\cS(I_3/3)$ with rank $k$ if and only if $2 \le k \le 3$.
\end{enumerate}

One may consider extending the results in the previous sections to
the compact convex set
$$\cS(\sigma_1,\sigma_2) = \{\rho \in D({m\cdot n}): \tr_1(\rho) = \sigma_2, \tr_2(\rho) = \sigma_1\}$$
for given $\sigma_1 \in D(m), \sigma_2 \in D(n)$.
As mentioned in the introduction, Klyachko \cite{Kl} has studied the relations between
the eigenvalues of $\rho\in \cS(\sigma_1,\sigma_2)$ and those of  $\sigma_1, \sigma_2$.
The answers depend on numerous linear inequalities that are difficult to handle.
As mentioned in the introduction, it is not easy
to generate and store all the inequalities  and
it is hard to use them to deduce answers for simple problems such as:

\medskip
\begin{problem}
Determine the ranks of the elements in $\cS(\sigma_1,\sigma_2)$.
\end{problem}

\begin{problem}
Determine the ranks of the extreme points of the set $\cS(\sigma_1,\sigma_2)$.
\end{problem}

\medskip
\noindent
Note also that unlike the case of $\cS(\sigma)$, the ranks of the elements
in $\cS(\sigma_1, \sigma_2)$ cannot be determined only by the ranks of $\sigma_1$ and
$\sigma_2$. For example, suppose $\sigma_1$ and $\sigma_2$ have the same rank.
If $\sigma_1, \sigma_2$ have the same set of non-zero
eigenvalues, then there is a rank one matrix in $\cS(\sigma_1,\sigma_2)$. Otherwise,
there is no rank one matrix in $\cS(\sigma_1,\sigma_2)$.
While it is difficult to determine the minimum rank of the matrices
in $\cS(\sigma_1,\sigma_2)$,  it is easy to show that the
largest rank of the matrices in $\cS(\sigma_1,\sigma_2)$ equal
$\rank(\sigma_1)\rank(\sigma_2)$. Also, it is not hard to show that a matrix
in $\cS(\sigma_1,\sigma_2)$ with minimum rank is an extreme point. However, it is not
easy to determine the ranks of extreme points in general.
In \cite{Pa}, it was shown that the rank of an extreme point in
$\cS(\sigma_1,\sigma_2)$ cannot exceed $(m^2+n^2-1)^{1/2}$.
In fact, one can
show that if $\sigma_i$ has rank $r_i$ for $i = 1,2$,
then the rank of an extreme point of $\cS(\sigma_1,\sigma_2)$ cannot exceed
$(r_1^2+r_2^2-1)^{1/2}$ based on the following extension of
Lemma \ref{lem1}.

\begin{lemma} \label{lem1-e}
 Let $\sigma_1\in D(m),\ \sigma_2\in D(n)$ and  $\rho \in
\cS(\sigma_1,\sigma_2) \subseteq D(m\cdot n)$. Then
$\rho$ is not an extreme point if and only if
there exists a nonzero $\xi\in H_{mn}$ such that
$\rho \pm \xi \in PSD(m\cdot n)$,  $\tr_1(\xi) = O_n$ and $\tr_2(\xi) = O_m$.
In such a case, there are $\rho_1, \rho_2 \in \cS(\sigma_1,\sigma_2)$
with $\rank(\rho_1) < \rank(\rho)$ such that
$\rho = (\rho_1+\rho_2)/2$.
\end{lemma}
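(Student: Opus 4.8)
The plan is to reproduce the proof of Lemma \ref{lem1} essentially verbatim, simply carrying the second linear constraint $\tr_2(\xi)=O_m$ along with $\tr_1(\xi)=O_n$ at every step; because both partial traces are linear maps and both annihilate the difference of any two elements of $\cS(\sigma_1,\sigma_2)$, the argument transfers with no new ideas.

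First I would establish the equivalence. For the forward direction, if $\rho$ is not an extreme point I would write $\rho=(\rho_1+\rho_2)/2$ with distinct $\rho_1,\rho_2\in\cS(\sigma_1,\sigma_2)$ and set $\xi=(\rho_1-\rho_2)/2\neq O$; then $\rho\pm\xi\in\{\rho_1,\rho_2\}\subseteq PSD(m\cdot n)$, and linearity of the partial traces gives $\tr_1(\xi)=(\sigma_2-\sigma_2)/2=O_n$ and $\tr_2(\xi)=(\sigma_1-\sigma_1)/2=O_m$. For the converse, given such a $\xi$ I would set $\rho_\pm=\rho\pm\xi$: these are positive semidefinite by hypothesis, have trace $1$ since $\tr(\xi)=\tr(\tr_1(\xi))=0$, and satisfy $\tr_1(\rho_\pm)=\sigma_2$, $\tr_2(\rho_\pm)=\sigma_1$; thus $\rho_\pm\in\cS(\sigma_1,\sigma_2)$ are distinct with midpoint $\rho$, so $\rho$ is not extreme.

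For the rank-reduction statement I would write $\rho=Z\Lambda Z^*$ with $Z=[z_1\ \cdots\ z_r]$ having orthonormal columns and $\Lambda=\diag(\lambda_1,\dots,\lambda_r)$, $\lambda_j>0$. The key step is the containment $\Range(\xi)\subseteq\Range(\rho)$: for $v\in\ker\rho$, the two inequalities $v^*(\rho\pm\xi)v\ge 0$ force $v^*\xi v=0$, whence $v^*(\rho+\xi)v=0$ together with $\rho+\xi\ge 0$ yields $(\rho+\xi)v=0$ and so $\xi v=-\rho v=0$. This lets me write $\xi=ZHZ^*$ for a nonzero $H=H^*\in M_r$, and since $Z$ has orthonormal columns, $\rho\pm t\xi=Z(\Lambda\pm tH)Z^*$ is positive semidefinite exactly when $\Lambda\pm tH\ge 0$. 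Taking $t>0$ to be the largest value for which both $\Lambda\pm tH\ge 0$, at least one of these matrices becomes singular, so the corresponding $\rho_1=\rho+t\xi$ or $\rho_2=\rho-t\xi$ has rank $<r$; scaling preserves $\tr_1(t\xi)=O_n$ and $\tr_2(t\xi)=O_m$, so $\rho_1,\rho_2\in\cS(\sigma_1,\sigma_2)$ and $\rho=(\rho_1+\rho_2)/2$.

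I do not expect a genuine obstacle here: the only substantive ingredient is the range containment $\Range(\xi)\subseteq\Range(\rho)$, which is exactly the positivity argument already used for Lemma \ref{lem1}. The second constraint plays a purely bookkeeping role — because $\tr_2$ is linear it is preserved under passing to $\rho\pm\xi$ and under the scaling $\xi\mapsto t\xi$ — so the entire extension is routine.
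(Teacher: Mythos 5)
Your proof is correct and is essentially the argument the paper intends: the paper states Lemma \ref{lem1-e} without proof, pointing to its proof of Lemma \ref{lem1}, and your write-up is exactly that proof with the extra constraint $\tr_2(\xi)=O_m$ carried through each step. In fact your version is slightly more complete than the paper's own proof of Lemma \ref{lem1}, since you justify the range containment $\Range(\xi)\subseteq\Range(\rho)$ and the choice of the extremal $t$ explicitly, both of which the paper only asserts.
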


Of course, similar questions can be asked for the set
$$\cS(\sigma_1, \dots, \sigma_k) = \{\rho\in D({n_1\cdots n_k}): \tr_{j'}(\rho) = \sigma_j\},$$
where $\tr_{j'}(\rho)$ is the reduced state of $\rho$ in the $j$th system,
for given $\sigma_j \in D({n_j})$ with $j = 1, \dots, k$.
Even more challenging problems will be the study of $\rho$ and
reduced states in subsystems that have overlaps.

\bigskip\noindent
{\bf Acknowledgment}

The research of Li and Poon
was supported by USA NSF, and HK RGC.
Li  was an honorary professor of the Shanghai University,
and an honorary professor of the University of Hong Kong.
The research of Wang was done while he was visiting the College of William and
Mary during the academic
year 2013-14 under the support of China Scholarship Council. We would like to thank the referee for some helpful comments.

\bibliographystyle{amsplain}


\end{document}